\documentclass[lettersize,journal]{IEEEtran}
\usepackage{cite}
\usepackage{amsmath,amssymb,amsfonts}
\usepackage{amsthm}
\usepackage{algorithmic}
\usepackage{algorithm}
\usepackage{graphicx,color}
\usepackage{textcomp}
\usepackage{array}
\usepackage[caption=false,font=normalsize,labelfont=sf,textfont=sf]{subfig}
\usepackage{stfloats}
\usepackage{url}
\usepackage{verbatim}

\usepackage{enumitem}
\usepackage{booktabs}
\usepackage{multirow}
\usepackage{makecell}

\newtheorem{theorem}{Theorem}
\newtheorem{lemma}{Lemma}
\newtheorem{corollary}{Corollary}
\theoremstyle{definition}
\newtheorem{definition}{Definition}

\global
\global
\usepackage{bm}

\begin{document}

\title{Evict Early, Decode Efficiently: Lightweight Decoding Strategies for Low-Delay Sliding Window RLNC}
\author{Foteini~Karetsi,~\IEEEmembership{Graduate Student Member,~IEEE} and~Evangelos~Papapetrou,~\IEEEmembership{Member,~IEEE}
\thanks{F. Karetsi and E. Papapetrou are with the Department of Computer Science and Engineering, University of Ioannina, Ioannina, Greece; e-mail: f.karetsi, epap@uoi.gr}
}

\maketitle

\begin{abstract}
Random Linear Network Coding (RLNC), and in particular Sliding Window RLNC (SW-RLNC), is a promising approach for reliable, low-latency communication in dynamic environments such as 6G and non-terrestrial networks. While prior work has largely focused on encoder-side parameter tuning, such as the code rate and coding window size, decoder-side design remains comparatively underexplored despite its direct impact on end-to-end performance. This paper addresses decoding window management, namely the policies governing window-boundary updates, removal of obsolete packets, and delivery of recovered packets to upper layers, and shows that these policies critically shape the trade-off between computational complexity and decoding performance, particularly in-order delivery delay. We derive sufficient conditions under which packets can be safely removed from the decoding window without compromising decodability or violating in-order delivery guarantees, and, building on these conditions, propose two lightweight decoding window management schemes. Through extensive simulation-based evaluation across a range of channel conditions and code rates, we show that both schemes preserve decoding efficiency while substantially reducing decoding complexity, and that one additionally achieves a marked reduction in in-order delivery delay. These results establish decoding window management as a key design dimension for practical SW-RLNC systems.
\end{abstract}

\begin{IEEEkeywords}
Decoding Window, Gaussian Elimination, Random Linear Network Coding, Sliding Window RLNC 
\end{IEEEkeywords}

\section{INTRODUCTION}
\IEEEPARstart{R}{andom} Linear Network Coding (RLNC)~\cite{ho2006random} has emerged as a powerful paradigm to enhance reliability and delay performance in modern communication systems, offering an alternative to conventional retransmission-based recovery and forward error correction (FEC) schemes~\cite{karetsi2022_lightweight,michel2024quirl,cohen2020adaptive}. RLNC decouples recovery from the identity of any specific lost packet, yielding a favorable delivery-delay trade-off: unlike traditional FEC schemes, its redundancy can be tailored to the prevailing channel conditions, avoiding both the overhead of over-provisioned protection and the recovery delay of under-provisioned protection, thereby improving goodput without sacrificing reliability. RLNC also remains effective under rapidly time-varying channel conditions, where the loss pattern cannot be anticipated in advance, and static protection schemes are consequently forced to operate conservatively. These properties are indispensable for demanding 6G use cases~\cite{3gpp22.261}, such as Ultra-Reliable Low-Latency Communications (URLLC)~\cite{3gpp38.913}, industrial automation~\cite{3gpp22.104}, and immersive extended reality (XR)~\cite{3gpp26.928}, where sub-millisecond latency budgets preclude feedback-driven retransmissions. Moreover, in frameworks governed by fine-grained resource management (e.g., network slicing~\cite{dos2019network}), RLNC mitigates the risk of over-provisioned redundancy competing for scarce radio resources. These advantages extend naturally to satellite and non-terrestrial networks, where long round-trip times, link asymmetry, and rapid channel degradation severely impair conventional recovery mechanisms. Consequently, RLNC is exceptionally well-suited for broadband access over low Earth orbit (LEO) constellations, delay-sensitive backhauling, and data delivery over intermittently connected or disrupted links.

Among RLNC variants, Sliding Window RLNC (SW-RLNC)~\cite{karzand2017design,wunderlich2017_caterpillar} is particularly attractive, as it enables continuous and equal protection of in-flight packets resulting in a favorable trade-off between performance and coding efficiency. This, however, comes at the cost of increased design complexity, since SW-RLNC requires careful multi-parameter tuning (e.g., coding window size, code rate and mode, decoding window size and management), with the optimal operating point being strongly dependent on channel dynamics. In recent years, prior work has primarily examined the optimization of individual SW-RLNC parameters, including the code rate~\cite{cohen2020adaptive,thai2014joint,zhang2021learning,wang2026adaptive,zhang2020low,malak2019tiny,cohen2022deepnp} and the coding window size~\cite{song2025adaptive,wowmom2023_karetsi,ali2023learning,li2023low}, as well as the joint optimization of parameter pairs such as coding window size and code rate~\cite{karetsi2026bayesian}. Nevertheless, the literature remains heavily skewed toward encoding-side optimization. In contrast, studies that explicitly address the decoding process are very limited in number, despite its critical role in end-to-end performance. Existing decoding-focused efforts either examine the design of on-the-fly decoding algorithms~\cite{icommlet2009_bioglio,iccw2009_heide,networking2011_heide} or assess the impact of the decoding window size on performance~\cite{wunderlich2017_caterpillar,ew2024_karetsi,roca2017less,wunderlich2017_wireless}.

In this work, we focus on the decoding side, with particular emphasis on the policy for managing the decoding window. Some discussion on decoding management already exists in the literature~\cite{wunderlich2017_wireless,wunderlich2017_caterpillar,ew2024_karetsi} but no specific policy has been proposed and/or evaluated. In general, a decoding window management policy comprises decisions regarding when to adjust window boundaries, remove obsolete or redundant packets, and deliver recovered packets to upper protocol layers. We argue that these decisions, and especially packet eviction policies, are central to the inherent trade-off between computational complexity and decoding performance, particularly with respect to in-order delivery delay. 
More specifically:
\begin{itemize}[leftmargin=1.2em, itemindent=0pt]
        \item We identify unnecessary packet retention in existing SW-RLNC decoder implementations and derive three sufficient conditions under which packets can be safely discarded from the decoding window.
        \item We prove that these conditions preserve not only the decoding correctness, but also the in-order delivery of packets.
        \item We propose two novel decoding window management strategies based on the identified conditions for packet eviction. Both strategies reduce the computational complexity of decoding, while the second also reduces in-order delivery delay. They are both designed to be compatible with existing on-the-fly decoding algorithms that keep the decoding matrix in reduced row echelon form.
        \item We experimentally demonstrate the preservation of decoding capability and quantify the benefits in computational complexity and in-order delay of the proposed strategies over an extensive range of configurations. We then show how these benefits can be leveraged to scale performance or reduce resource requirements.
\end{itemize}

In the remainder of this paper, we first present the necessary background on SW-RLNC, followed by a detailed review of prior work on SW-RLNC fine-tuning (Section~\ref{preliminaries}). We then introduce the conditions for early packet eviction and delineate the proposed decoding window management strategies (Section~\ref{dynamic-matrix-management}). In Section~\ref{evaluation}, we provide a comprehensive performance evaluation through extensive simulations. Finally, in Section~\ref{conclusion}, we summarize our main findings and outline directions for future research.

\section{BACKGROUND AND RELATED WORK}\label{preliminaries}
Maximizing the operational efficiency of 
SW-RLNC frameworks necessitates the meticulous calibration of a multi-dimensional parameter space distributed across both the transmitter (encoder) and the receiver (decoder). Within this paradigm, a complex interplay of architectural variables collectively dictates the system's performance. 
Consequently, an extensive body of literature has been dedicated to characterizing and optimizing these configurations across both conventional block-based and SW-RLNC architectures. Despite these extensive contributions, a critical asymmetry persists in the current research landscape. Existing optimization frameworks predominantly favor transmitter-side adaptations, often operating under the reductive assumption that decoder calibration is either trivial or secondary to encoding policies.

\subsection{Encoder Side Fine-Tuning}\label{encoder-side-literature}
 
The transmitter operates in recurring intervals, referred to as \emph{coding cycles}. During each cycle, the sender transmits a mixture of $k$ \textit{native} (or \textit{source}) packets (i.e., uncoded original packets) and $n$ \textit{coded} packets. The ratio of native to coded packets determines the code rate, $R=\frac{k}{k+n}$. This transmission mode is known as \emph{systematic} coding~\cite{iccw2009_heide}. Alternatively, the sender may adopt a \emph{full-vector} mode, in which all transmitted packets are coded, i.e., the $k$ native packets are replaced by $k$ coded packets. Coded packets are generated via linear combinations of the native packets currently residing within the \emph{coding window}, denoted by $\mathcal{C}_{w}$. The coding window is updated dynamically according to a First-In-First-Out (FIFO) management policy. Upon generation of a new native packet, it is appended to $\mathcal{C}_{w}$, and the chronologically oldest packet is evicted in order to maintain a maximum size, known as the \emph{coding window size} ($CW$). $CW$ is typically parameterized relative to the coding cycles as $CW = d \cdot k$, where $d \in \mathbb{Z}^{+}$ represents the \emph{coding depth}~\cite{karetsi2022_lightweight,wowmom2023_karetsi}. 

The relative merits of systematic~\cite{zhuo2026deep,claudel2025linc,dilanchian2025adjustable,huang2025utility,zhang2020low,wang2026adaptive,zhu2024trade,zhang2021learning} and full-vector~\cite{song2025adaptive,li2021sliding,cohen2020adaptive,cohen2022deepnp} coding paradigms have been comprehensively investigated in existing literature, establishing a foundational baseline for diverse network protocol designs. Building upon these insights, early optimization frameworks primarily focused on code rate adjustment through either static (offline)~\cite{garrido2019joint,guo2023network,zhang2020low,malak2019tiny,pandi2017pace,thai2014joint} or adaptive (online) mechanisms~\cite{huang2025utility,mao2021deep,wang2026adaptive,latzko2023analysing,cohen2022deepnp,zhang2021learning}. More specifically, offline approaches determine the code rate prior to transmission based on channel statistics, whereas online schemes adapt the code rate in response to instantaneous channel and network dynamics. Most recent schemes leverage deep learning techniques, such as Deep Reinforcement Learning~\cite{wang2026adaptive,latzko2023analysing}, to optimize coded redundancy by learning the relationship between observed network conditions and coding configurations. Owing to their data-driven nature, these methods can enable rapid decision-making and near real-time adaptation, albeit at the cost of increased algorithmic and computational complexity, which may limit their applicability in latency-sensitive and resource-constrained scenarios.

Subsequent studies highlighted the critical role of the coding window size in SW-RLNC~\cite{wowmom2023_karetsi,song2025adaptive,zhu2024trade,ali2023learning} — a pursuit conceptually parallel to generation size optimization in block-based RLNC frameworks~\cite{yeow2009minimizing,yang2012adaptive}. The coding window size has been identified as a key performance parameter that directly influences the trade-off between transmission efficiency, decoding delay, and computational complexity. However, recent efforts 
recognized that the code rate and the coding window size are strongly coupled and cannot be effectively optimized in isolation~\cite{ew2024_karetsi}. Hence, the research trajectory has evolved toward the joint optimization of both parameters spanning both block-based~\cite{claudel2025linc,dilanchian2025adjustable} and SW-RLNC~\cite{karetsi2026bayesian}, with emerging schemes demonstrating substantial performance gains over isolated parameter tuning. In this scope, methodologies vary significantly. For instance,~\cite{dilanchian2025adjustable} exploits lightweight, reactive heuristic functions to co-adjust the code rate and generation size at runtime, whereas~\cite{claudel2025linc} formulates a rigorous in-network mathematical model to jointly optimize block-based coding parameters directly within the network fabric.
Our previous work~\cite{karetsi2026bayesian} proposes a novel Bayesian Optimization framework for SW-RLNC, leveraging statistical surrogate models to actively and sample-efficiently search the parameter space.

\subsection{Optimizing the Decoding Process}\label{decoder-side-literature}
At the receiver, incoming packets are ingested into a \textit{Decoding Matrix}, denoted as $\mathbb{D}$, where packet erasures are resolved progressively via an on-the-fly decoding framework~\cite{icommlet2009_bioglio,iccw2009_heide,networking2011_heide}. This on-the-fly operational paradigm is necessitated by the sliding window architecture, wherein payloads arrive as a continuous stream rather than being partitioned into disjoint generations or blocks. 
Within this context, a prominent decoding approach is On-the-Fly Gaussian Elimination (OFGE)~\cite{icommlet2009_bioglio}, a variant of classical Gaussian elimination that combines partial forward elimination with a swap heuristic. Although originally introduced for Luby Transform (LT) codes, OFGE can be readily extended to SW-RLNC. A second widely adopted approach is the Kodo algorithm~\cite{iccw2009_heide,networking2011_heide}, which applies forward elimination and backward substitution to keep the decoding matrix in reduced row-echelon form at all times.

In practical deployments, $\mathbb{D}$ must be finite and restricted to packets that actively contribute to erasure resolution. 
This constraint motivates the definition of a \emph{decoding window}, denoted as $\mathcal{D}_{w}$, which comprises the set of packets that are still relevant for the decoding process. The maximum size of $\mathcal{D}_{w}$, designated as the \emph{decoding window size} ($DW$), serves as a critical design parameter that governs a fundamental trade-off between decoding performance and computational complexity. Specifically, an expanded $DW$ enhances the probability of packet recovery by preserving more information, whereas a restricted $DW$ minimizes decoding latency and computational overhead. Despite its architectural significance, existing literature frequently overlooks the optimization of $DW$ by assuming a symmetric configuration where $DW\!=\!CW$~\cite{zhu2024trade}. Under this condition, decoding window management becomes trivial because it structurally mirrors the transmitter's FIFO queue management; packets are evicted sequentially only when the cardinality of $\mathcal{D}_{w}$ exceeds $DW$. Prematurely evicting a packet in this regime creates an algebraic decoupling, wherein the decoder may encounter incoming coded packets that incorporate native payloads already purged from $\mathcal{D}_{w}$, thereby rendering the new coded packets useless. Consequently, to guarantee structural decodability, the system must satisfy the fundamental architectural constraint: $DW \geq CW$.

Operating under the condition $DW\!\!>\!\!CW$ introduces two primary design challenges: (i)~determining the optimal decoding window size $DW$, and (ii)~establishing dynamic policies for appending and evicting packets from the decoding window $\mathcal{D}_{w}$. 
Wunderlich et al.~\cite{wunderlich2017_caterpillar,wunderlich2017_wireless} were among the first to address the former challenge by evaluating the impact of $DW$ on decoding performance, albeit under a conventional FIFO window management policy. 
Their empirical findings indicate that $DW \approx 1.5CW$ serves as a practical operational setpoint. In parallel, Roca et al.~\cite{roca2017less} explored the relationship between latency budget constraints and the configuration of SW-RLNC coding parameters. Their results were largely consistent with those of~\cite{wunderlich2017_caterpillar} for burst-loss channels, while suggesting that a larger decoding window is preferable in memoryless channels, highlighting the dependence of the optimal $DW$ on the underlying loss characteristics.
While prior literature has initiated the discussion on decoding window management, no fully realized eviction policy has been proposed. Most notably, the heuristics discussed in~\cite{wunderlich2017_wireless} to identify potential eviction candidates suffer from key limitations. First, these heuristics are incomplete, leaving a non-trivial subset of viable candidate packets undetected. Second, they overlook the critical constraint of in-order packet delivery. Finally,~\cite{wunderlich2017_wireless} stops short of establishing formal, mathematically proven-correct eviction rules, leaving a significant gap between candidate identification and practical execution. 
Consequently, due to the absence of this formal, rule-based foundation, no unified eviction policy is proposed, implemented and/or evaluated under realistic network conditions.
Subsequently,~\cite{ew2024_karetsi} presented a preliminary mechanism to update the decoding window contents, although the underlying design rationale was neither extensively discussed nor evaluated. This observation underscores the need for a more systematic investigation of decoding window control mechanisms. To the best of our knowledge, a comprehensive framework for the structural management of $\mathcal{D}_{w}$ remains unexplored in prior literature. 

This work proposes
two novel decoding window management strategies aimed at optimizing the trade-off between decoding efficacy and computational overhead.
Crucially, active management of $\mathcal{D}_{w}$ becomes meaningful only in the $DW\!>\!CW$ regime. 
Under this condition, the decoder must execute a sophisticated management policy to govern window boundary updates---specifically determining which packets to retain, which to discard, and precisely when to release recovered payloads to the upper layers. 
These tactical decisions are vital, as they not only preserve the baseline decoding probability but also directly influence the in-order delivery delay and the instantaneous size of $\mathcal{D}_{w}$. The latter parameter, in turn, inherently governs the computational complexity of the decoding process.

\section{DYNAMIC DECODING WINDOW MANAGEMENT}\label{dynamic-matrix-management}
\subsection{System Model and Notation}\label{system-model}
We consider a single-channel wireless communication system comprising a transmitter (encoder) and a receiver (decoder). 
The transmitter implements 
SW-RLNC --- configured in either systematic or full-vector mode --- as a FEC mechanism, thereby operating without receiver-to-sender feedback. Each native packet $p$ is uniquely identified by a monotonically increasing sequence number, $sn^{p}$. Encoded payloads are generated via linear combinations of the native packets currently residing within the coding window $\mathcal{C}_{w}$ over the Galois Field $GF(2^8)$. A coded packet $c$ is said to \emph{contain} (or \emph{protect}) a native packet $p$ if $p$ is included in the linear combination that forms $c$, i.e., $c = c_{p}p+\ldots$, where $c_{p}$ is the non-zero coefficient of $p$ in the linear combination. Structurally, a coded packet $c$ is indexed by the sequence number corresponding to the first non-zero coefficient, i.e., the minimum packet id among the packets $c$ contains. 
The boundaries of $\mathcal{C}_{w}$ are defined by the minimum and maximum sequence numbers of its constituent native packets, $\mathcal{C}_{w}.sn_{\min}$ and $\mathcal{C}_{w}.sn_{\max}$, respectively, satisfying the constraint $\mathcal{C}_{w}.sn_{\max} - \mathcal{C}_{w}.sn_{\min} < CW$.

\begin{table}[t]
\centering
\caption{Summary of Main Notation}
\label{tab:notation}
\begin{tabular}{@{}cl@{}}
\toprule
\textbf{Symbol} & \textbf{Description}\\
\midrule
$\mathcal{C}_{w}$ & Encoder coding window \\
$\mathcal{D}_{w}$ & Decoder decoding window \\
$CW$ & Coding window size, i.e., maximum size of $\mathcal{C}_{w}$ \\
$DW$ & Decoding window size, i.e., maximum size of $\mathcal{D}_{w}$ \\
$sn^{q}$ & Sequence number of packet $q$ \\
$\mathcal{C}_{w}.sn_{\min}$ & Minimum sequence number in $\mathcal{C}_{w}$ \\
$\mathcal{C}_{w}.sn_{\max}$ & Maximum sequence number in $\mathcal{C}_{w}$ \\   
$\mathcal{D}_{w}.sn_{\min}$ & Minimum sequence number in $\mathcal{D}_{w}$ \\
$\mathcal{D}_{w}.sn_{\max}$ & Maximum sequence number in $\mathcal{D}_{w}$ \\   
$last\_delivered$ & Sequence number of the last packet delivered in-order \\
\bottomrule
\end{tabular}
\end{table}
At the receiver, packet erasures are progressively resolved via the Kodo decoding algorithm~\cite{iccw2009_heide,networking2011_heide}, utilizing a decoding window $\mathcal{D}_{w}$ and a decoding matrix $\mathbb{D}$ which is kept in reduced row echelon form at all times. 
Any other decoding algorithm with the reduced row echelon form property could also be utilized.
The operational boundaries of $\mathcal{D}_{w}$ and $\mathbb{D}$ are common and, analogous to $\mathcal{C}_{w}$, strictly defined by the sequence numbers of the chronologically oldest ($\mathcal{D}_{w}.sn_{\min}$) and most recent ($\mathcal{D}_{w}.sn_{\max}$) packets they currently retain. The size of both structures is constrained so that $\mathcal{D}_{w}.sn_{\max} - \mathcal{D}_{w}.sn_{\min} < DW$.

An incoming packet $x$, after being processed, is placed in $\mathcal{D}_{w}$ and a corresponding row is added to $\mathbb{D}$. The row is indexed by the sequence number of the packet $sn^{x}$ and contains the coefficients of the linear combination that forms the packet. A packet in $\mathcal{D}_{w}$ (and the corresponding row in $\mathbb{D}$) can be in one of three states:
\begin{itemize}[leftmargin=1.2em, itemindent=0pt]
    \item DECODED: The packet $x$ is a directly received native packet or a coded packet that has been reduced (decoded) to a native packet. From the decoding matrix perspective, the corresponding row $sn^{x}$ contains a single non-zero element in the $sn^{x}$ column, with all other elements being zero.
    \item PARTIALLY\_DECODED: The packet $x$ is actually a coded packet that has been partially recovered, but additional information is required for complete reconstruction, i.e., for transforming it into a DECODED state. 
    In the decoding matrix, a partially decoded packet is represented by row $sn^{x}$ that contains multiple non-zero elements, indicating that it is still entangled with other packets and cannot be fully resolved. More specifically, the first non-zero coefficient is placed at column $sn^{x}$ (the pivot) and means that $x$ contains the native packet $p_{v}$ with sequence number $sn^{p_{v}}=sn^{x}$, i.e, $x=c_{pv}p_{v}+\ldots$. Any other non-zero coefficient at column $sn^{q} > sn^{x}$ means that $x$ also contains the native packet $q$, i.e., $x=c_{pv}p_{v}+c_{q}q+\ldots$, where $c_{q}$ is the non-zero coefficient.
    \item MISSING: This is a native packet for which no information has been received or recovered and is currently absent from the decoding matrix. Such packets are indirectly identified through the sequence numbers of the received packets, which indicate gaps in the expected sequence. In the decoding matrix, a missing packet is represented by a row that contains only zero elements.
\end{itemize}

\noindent Table~\ref{tab:notation} summarizes the related notation.

\subsection{Motivation}\label{motivation}

Recent literature has demonstrated that decoupling $DW$ from $CW$ significantly optimizes the operational envelope of SW-RLNC~\cite{wunderlich2017_caterpillar,roca2017less,ew2024_karetsi}.
Specifically, configuring the receiver with an expanded decoding window ($DW > CW$) relaxes the strict performance bounds of traditional symmetric configurations ($DW\!=\!CW$), establishing a more favorable trade-off space across decoding efficiency, in-order delivery delay, and computational complexity. 

However, we argue that unlocking the full potential
of the asymmetric regime requires a highly nuanced
approach to decoding window management. 
When $DW = CW$, window regulation is elementary, adhering to a standard FIFO eviction policy that implicitly mirrors the transmitter's coding state. Conversely, when $DW > CW$, a standard FIFO policy becomes strictly sub-optimal as it inadvertently forces the operational window size to its maximum upper bound, $DW$. To circumvent this inflation, the receiver must execute sophisticated, state-aware eviction decisions that dynamically minimize the effective window footprint toward $CW$ while preserving the maximum attainable erasure recovery probability. As a result, an optimized window management strategy directly alleviates the computational overhead and delivery latency associated with large matrix operations by selectively purging uninformative components. This optimization yields two distinct architectural advantages:
\begin{itemize}[leftmargin=1.2em, itemindent=0pt]
    \item \emph{Resource Minimization}: Given an invariant target for decoding efficiency and in-order delivery delay, an advanced management policy can achieve the target while strictly minimizing the required resources at the decoder.
    \item \emph{Performance Scaling}: Under a fixed computational or latency budget, an efficient management scheme permits the system to scale to significantly larger $DW$ values, thereby extracting superior decoding efficiency that would otherwise be prohibited by hardware constraints. 
\end{itemize}
Consequently, developing precise, state-aware eviction strategies is an essential directive for high-throughput, low-latency sliding window architectures.

\subsection{Proposed Decoding Window Management Schemes}
A complete management policy must define the conditions for packet admission, delivery to the upper layers and eviction from $\mathcal{D}_{w}$ (and $\mathbb{D}$). Below, we delineate the three mechanisms that comprise the proposed \emph{Early Elimination of Decoded (EED)} and \emph{Early Elimination of Decoded and Missing (EEDM)} decoding window management schemes, starting with EED. The workflow of both EED and EEDM is illustrated in  Fig.~\ref{fig:eed_eedm_blocks}.
\begin{figure}[t]
    \centering
    \includegraphics[width=0.95\linewidth]{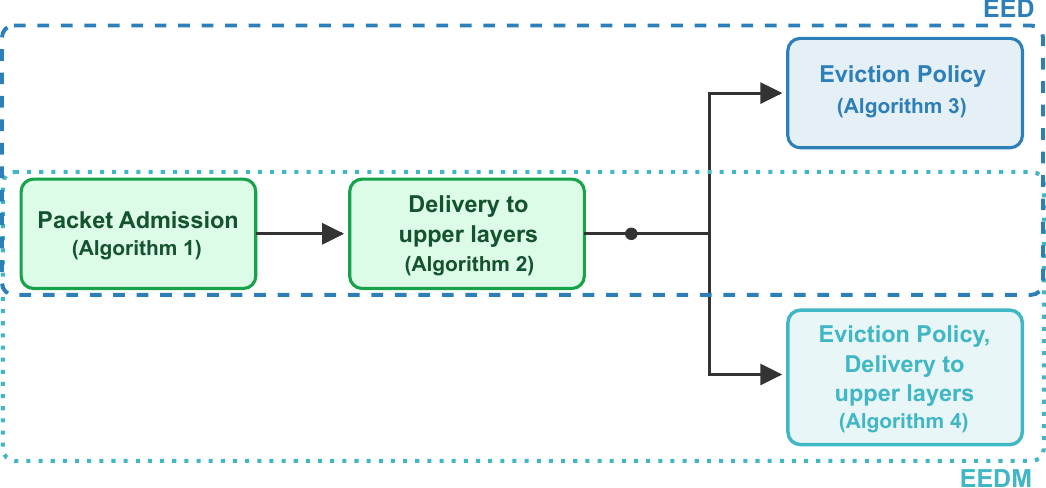}
    \caption{EED and EEDM workflows}
    \label{fig:eed_eedm_blocks}
\end{figure}

\subsubsection{EARLY ELIMINATION OF DECODED (\emph{EED})}

Adding new packets to the decoding window is straightforward (Algorithm~\ref{alg:packet_arrival}), as all incoming native packets are added to the decoding window upon arrival. This results in updating the upper boundary of the decoding window, $\mathcal{D}_{w}.sn_{\max}$, to the sequence number of the newly added packet. Coded packets are added to the decoding window only if the decoding algorithm renders them to be innovative, i.e., they contain new information that can contribute to the recovery of missing packets. If innovative, a coded packet will be processed and transformed by the on-the-fly decoding algorithm into either a DECODED or PARTIALLY\_DECODED packet, and it will be added to the decoding window with its sequence number set to the sequence number corresponding to the first non-zero coefficient it contains after processing.
\begin{algorithm}[t]
\caption{Packet Admission}\label{alg:packet_arrival}
\begin{algorithmic}[1]
\REQUIRE $\mathcal{D}_{w}$, incoming packet $p$
\IF{$p$ is a native packet}
    \STATE Add $p$ to $\mathcal{D}_{w}$ 
    \STATE $\mathcal{D}_{w}.sn_{\max} \gets sn^{p}$
\ELSIF{$p$ is a coded packet}
    \STATE Process $p$
    \IF{$p$ is innovative}
        \STATE $sn^{p} \gets$ index of first non-zero coefficient of $p$
        \STATE Add $p$ to $\mathcal{D}_{w}$
    \ELSE
        \STATE \textbf{discard} $p$
    \ENDIF
\ENDIF
\end{algorithmic}
\end{algorithm}

Delivery to upper layers is performed only for DECODED packets that are in-order. 
\begin{definition}
\label{def:in-order}
A DECODED packet $p$ is said to be \emph{in-order} iff all older packets in $\mathcal{D}_{w}$, i.e., those with lower sequence numbers, are also DECODED.
\end{definition}
\noindent Note that, delivering a packet to upper layers does not necessarily imply that it is removed from $\mathcal{D}_{w}$. This is because a DECODED packet may be useful for decoding future packets. The pseudocode of the basic routine for delivering packets to upper layers is provided in Algorithm~\ref{alg:packet_delivery}. This routine is executed immediately after adding packets to $\mathcal{D}_{w}$ (Algorithm~\ref{alg:packet_arrival}), since adding new coded packets may modify the in-order status of packets in $\mathcal{D}_{w}$. Also note that managing the delivery of packets requires an extra pointer to keep track of the last delivered packet, denoted as $last\_delivered$. This pointer is necessary because packets may be delivered to the upper layers before they are evicted from the decoding window.
\begin{algorithm}[!t]
\caption{Delivery to Upper Layers}
\label{alg:packet_delivery}
\begin{algorithmic}[1]
\REQUIRE $\mathcal{D}_{w}$, $last\_delivered$
\WHILE{$last\_delivered < \mathcal{D}_{w}.sn_{\max}$}
    \STATE $q \gets$ packet in $\mathcal{D}_{w}$ with sequence number $last\_delivered + 1$
    \IF{state($q$) = DECODED}
        \STATE Deliver $q$
        \STATE $last\_delivered \gets sn^{q}$
    \ELSE 
        \STATE \textbf{break} 
    \ENDIF
\ENDWHILE
\end{algorithmic}
\end{algorithm}

The more intricate aspect of decoding window management concerns the eviction of packets from $\mathcal{D}_{w}$. Under the baseline strategy commonly adopted in the literature, a packet — irrespective of its state — is evicted only once it reaches the last position of the decoding window, purely to enforce the size constraint $DW$. While we also utilize this strategy, in this work, we observe that DECODED packets can be safely evicted prior to reaching this boundary without compromising the decoding process and the in-order delivery of packets. 
Specifically, first we make the observation that:
\begin{lemma}\label{lm:not_useful}
    A packet $p$ cannot be used to decode any other packet in $\mathcal{D}_{w}$ if $\mathcal{D}_{w}.sn_{\max}-sn^{p}>CW$.
\end{lemma}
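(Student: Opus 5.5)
The plan is to show that, once $\mathcal{D}_{w}.sn_{\max}-sn^{p}>CW$, the packet $p$ shares no native packet with any packet that can still take part in an elimination step. Two facts will be used. First, every coded packet is a combination of native packets whose sequence numbers lie within a span smaller than $CW$, by the constraint $\mathcal{C}_{w}.sn_{\max}-\mathcal{C}_{w}.sn_{\min}<CW$. Second, row operations performed by the Kodo algorithm on the reduced row echelon matrix $\mathbb{D}$ only combine rows that share a non-zero column. So, for $p$ to help decode some packet $x$, the row of $p$ and the row of $x$ (or some newly arriving row) must overlap in a column.

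The key steps, in order, are as follows.

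\textbf{(i) Span of the row of $p$.} Let $p$ have pivot $sn^{p}$. I would argue that all non-zero columns of $p$ lie in $[sn^{p}, sn^{p}+CW-1]$. This holds for an original coded packet, since it was formed from a single coding window. It must also hold after processing. Elimination of $p$ against rows with pivots greater than $sn^{p}$ can in principle introduce columns beyond the original span. To handle this, I would prove an invariant by induction on decoding operations: every row of $\mathbb{D}$ with pivot $s$ has support in $[s, s+CW-1]$.

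This invariant is the main obstacle. Subtracting from $p$ a multiple of a row with pivot $s'>sn^{p}$ gives support up to $s'+CW-1$, which may exceed $sn^{p}+CW-1$. I would first try to exploit the fact that elimination is only triggered at a column $s'$ already in the support of $p$, and then bound things via the original encoding windows. If the invariant fails as stated, I would fall back on a weaker but sufficient claim. Namely, I would argue directly about \emph{future} packets, since that is what ``used to decode'' ultimately concerns, and show that no packet in $\mathcal{D}_{w}$ can gain a new non-zero column at $sn^{p}$ or below.

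\textbf{(ii) Future arrivals cannot involve $p$.} Any packet arriving after $\mathcal{D}_{w}.sn_{\max}$ has been seen was encoded from a window with $\mathcal{C}_{w}.sn_{\max}\ge \mathcal{D}_{w}.sn_{\max}$. Hence its support lies in $(\mathcal{D}_{w}.sn_{\max}-CW, \infty)$, and so it lies strictly above every column in which $p$ is non-zero only if (i) holds. Therefore a new arrival never has a non-zero entry in the pivot column of $p$, and $p$ is never used to eliminate anything from it.

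\textbf{(iii) Existing rows cannot benefit.} $\mathbb{D}$ is in reduced row echelon form, so the pivot column $sn^{p}$ is already zero in every other row. No further reduction of another row by $p$ is possible unless a new non-zero entry appears in column $sn^{p}$. By (ii) and the column-locality of updates, such entries can only arise from arrivals, which are ruled out. Combining (ii) and (iii) yields the lemma: $p$ can never serve as a pivot row in any future elimination step for any packet in $\mathcal{D}_{w}$.
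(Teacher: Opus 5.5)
Once step (i) is dropped, your argument is the paper's. Reduced row echelon form already puts a zero in column $sn^{p}$ of every other row. Since $sn^{p}<\mathcal{D}_{w}.sn_{\max}-CW<\mathcal{C}_{w}.sn_{\min}$ for every later encoding window, no arriving coded packet contains $p$, so $p$ is never needed as the pivot row of any elimination.

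Your column-locality induction in (iii) is slightly more careful than the paper, which does not say why processing an arrival cannot recreate an entry in column $sn^{p}$. The reason is that an arrival is reduced only by rows whose pivots lie in its support, hence at or above $\mathcal{C}_{w}.sn_{\min}>sn^{p}$. Back-substitution only subtracts multiples of the reduced arrival. The paper also adds a remark about ``atypical'' use of $p$ on a non-pivot column. Your restriction to pivot-row use, which is how the decoder actually operates, makes that remark unnecessary.

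Step (i), which you call the main obstacle, should simply be removed: the invariant is false and nothing needs it. Back-substitution pushes supports to the right. Take $CW=3$ with natives $1$--$4$ lost. A coded packet on $\{1,2,3\}$ gives a row with pivot $1$. A later coded packet on $\{2,3,4\}$ takes pivot $2$, and clearing column $2$ from the first row leaves support $\{1,3,4\}\not\subseteq[1,\,1+CW-1]=[1,3]$. Further arrivals can push it further right.

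The clause in (ii) that makes the conclusion depend on (i) is likewise unnecessary. Only the single column $sn^{p}$ matters, because a row is used only to clear its own pivot column. That column lies below every later coding window regardless of where the rest of $p$'s support sits. Your fallback claim, that no row ever gains a non-zero at a column $\le sn^{p}$, is true and is exactly what (iii) needs.
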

\begin{IEEEproof}
Since $\mathcal{D}_{w}$ is maintained in reduced row echelon form, every other row currently present in $\mathcal{D}_{w}$ has a zero coefficient at column $sn^{p}$. Moreover, since $\mathcal{D}_{w}.sn_{\max}-sn^{p}>CW$, $p$ lies outside the coding window $\mathcal{C}_{w}$. Thus, any coded packet subsequently entering $\mathcal{D}_{w}$ also has a zero coefficient at column $sn^{p}$. Consequently, every row, corresponding to a packet $q$, that is or will be present in $\mathcal{D}_{w}$ has a zero coefficient at column  $sn^{p}$, so using $p$ to eliminate this already-zero entry is a trivial operation that leaves the target row unchanged. Even if $p$ were, instead, used atypically to eliminate a coefficient of $q$'s at some other column, that would necessarily introduce a nonzero coefficient at column $sn^{p}$. This not only violates the reduce row echelon form invariant that $sn^{p}$ is $p$'s exclusive pivot column but also ties $q$'s decodability to column $sn^{p}$, which by assumption cannot be resolved. Hence, no use of $p$ — whether via its pivot column or otherwise — contributes usefully to the decoding of $q$.
\end{IEEEproof}
\noindent Then we can show that:
\begin{theorem}\label{thm:decoded}
A DECODED packet $p$ can be evicted from $\mathcal{D}_{w}$ without compromising the decoding-delivery process as soon as $\mathcal{D}_{w}.sn_{\max}-sn^{p}>CW$, provided that $p$ is in-order.
\end{theorem}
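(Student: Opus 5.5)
The proof splits ``the decoding-delivery process'' into its two components. The first is decodability: the evolution of $\mathbb{D}$ and the set of packets that can eventually be recovered must be the same whether or not $p$ is evicted. The second is in-order delivery: the sequence of packets handed to the upper layers by Algorithm~\ref{alg:packet_delivery}, and the times at which they are handed over, must be unchanged. Each part is verified separately, and the hypotheses of the theorem are used in turn: that $p$ is DECODED, that $\mathcal{D}_{w}.sn_{\max}-sn^{p}>CW$, and that $p$ is in-order.

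For decodability, Lemma~\ref{lm:not_useful} does most of the work. Since $\mathcal{D}_{w}.sn_{\max}-sn^{p}>CW$, $p$ cannot contribute to decoding any packet currently in $\mathcal{D}_{w}$ or any packet arriving later. $\mathcal{D}_{w}.sn_{\max}$ only increases and $\mathcal{C}_{w}$ only slides forward, so the hypothesis stays true from then on.

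Two further facts are needed.
\begin{enumerate}
\item Every row other than $p$ has a zero in column $sn^{p}$, by reduced row echelon form. Every future coded packet also has a zero there, because $p\notin\mathcal{C}_{w}$. Hence deleting row $p$ and column $sn^{p}$ leaves every remaining row, and hence every elimination and back-substitution step, unchanged.
\item Since $p$ is DECODED, its row has a single nonzero entry, at $sn^{p}$. So $p$ carries no information about any other native packet, and removing it loses no linear relation. The rank of the remaining system restricted to the other columns is therefore preserved.
\end{enumerate}
Together these show that the innovativeness test in Algorithm~\ref{alg:packet_arrival} gives the same answer for every future packet, and that every other packet reaches DECODED status at exactly the same time as it would without the eviction.

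For delivery, the in-order hypothesis is used. By Definition~\ref{def:in-order}, every packet older than $p$ in $\mathcal{D}_{w}$ is DECODED. Algorithm~\ref{alg:packet_delivery} runs immediately after each admission, and the eviction check is performed after it. So $last\_delivered\ge sn^{p}$ at the moment of eviction, meaning $p$ and all its predecessors have already been delivered. Evicting $p$ therefore never removes an undelivered packet. It also never makes the while-loop look up a missing sequence number: the loop only queries $last\_delivered+1>sn^{p}$, so the absence of $p$ cannot cause a false \textbf{break} or a skipped delivery. Finally, $\mathcal{D}_{w}.sn_{\min}$ moves forward consistently, since every packet older than $p$ is already delivered and, by the same argument, evictable.

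The delicate step is the timing point $last\_delivered\ge sn^{p}$. ``In-order'' describes the state of the window, while delivery is an action taken by the algorithm. Closing this gap requires fixing the order of operations: admission, then delivery, then eviction. With that order, in-order status at eviction time implies that delivery has already occurred. If the schedule were different, the loop in Algorithm~\ref{alg:packet_delivery} would have to be run before evicting.
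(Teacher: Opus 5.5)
Your proof is correct and has the same two-part structure as the paper's. Lemma~\ref{lm:not_useful}, applicable because $p$ lies outside $\mathcal{C}_{w}$, settles decodability, and the in-order hypothesis guarantees that $p$ and all its predecessors have already been delivered, so evicting $p$ cannot disturb later deliveries. You go further than the paper in two places: your unit-row/RREF argument shows that elimination steps and innovativeness tests are literally unchanged, and your admission--delivery--eviction ordering turns the state ``in-order'' into the fact $last\_delivered \ge sn^{p}$, which the paper simply asserts holds ``by definition.''
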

\begin{IEEEproof}
    By $\mathcal{D}_{w}.sn_{\max}-sn^{p}\!>\!CW$, $p$ lies outside $\mathcal{C}_{w}$ and is therefore, by Lemma~\ref{lm:not_useful}, not useful for decoding any other packet in $\mathcal{D}_{w}$. This alone, however, does not guarantee safe eviction, since eviction must also preserve correct in-order delivery. 
    To illustrate why $p$ must additionally be in-order, suppose that $p$ is evicted while not yet in-order. This means that there exists a packet $r$, older than $p$, i.e., $sn^{r}\!<\!sn^{p}$ that is either MISSING or PARTIALLY DECODED. If $r$ is subsequently decoded, this will force an out-of-order delivery, breaking the in-order delivery property. Requiring $p$ to be in-order avoids this: an in-order DECODED packet has, by definition, already been delivered together with every preceding packet. Its eviction therefore cannot compromise the correctness of subsequent deliveries.
\end{IEEEproof}
\begin{algorithm}[t]
\caption{\emph{EED} Eviction Strategy}
\label{alg:eed_decoder}
\begin{algorithmic}[1]
\REQUIRE $\mathcal{D}_{w}$, $CW$, $DW$
\WHILE{$\mathcal{D}_{w}.sn_{\max} - \mathcal{D}_{w}.sn_{\min} > CW$}
    \STATE $p \gets$ packet in $\mathcal{D}_{w}$ with sequence number $sn_{\min}$
    \IF{$\mathcal{D}_{w}.sn_{\max}\!-\!\mathcal{D}_{w}.sn_{\min}\!+\!1\!>\!DW$ \OR state($p$) = DECODED} \label{condition}
        \STATE Evict packet $p$ from $\mathcal{D}_{w}$
        \STATE $\mathcal{D}_{w}.sn_{\min} \gets \mathcal{D}_{w}.sn_{\min} + 1$
    \ELSE 
        \STATE \textbf{break} 
    \ENDIF
\ENDWHILE
\end{algorithmic}
\end{algorithm}
We refer to this proposed eviction scheme as EED, similar to the decoding window management strategy that utilizes it. Its pseudocode is provided in Algorithm~\ref{alg:eed_decoder}. Fig.~\ref{fig:eed_eedm_blocks} illustrates how the eviction policy blends with the other mechanisms to create the EED management scheme.
An important observation is that, when removing a packet from $\mathcal{D}_{w}$ (lines 3-6, Algorithm~\ref{alg:eed_decoder}), we do not need to forward it to the upper layers because this packet is in-order so it has already been delivered to upper layers by the delivery routine (Algorithm~\ref{alg:packet_delivery}).

\subsubsection{EARLY ELIMINATION OF DECODED AND MISSING (\emph{EEDM})}\label{EEDM}
A closer examination of \emph{EED} reveals that its eviction policy can be further enhanced by also considering the eviction of MISSING and PARTIALLY\_DECODED packets. Realizing this enhancement requires identifying two independent conditions under which such packets can be safely evicted from $\mathcal{D}_{w}$ without compromising the decoding-delivery process. We first derive the condition under which a MISSING packet can be safely evicted (Theorem~\ref{thm:missing}). We then extend the analysis to PARTIALLY\_DECODED packets, deriving the corresponding eviction condition in Theorem~\ref{thm:unrecoverable_partially_decoded}. We start by establishing the condition for MISSING packets.

\begin{theorem}\label{thm:missing}
A MISSING packet $p$ is unrecoverable if $\mathcal{D}_{w}.sn_{\max}\!-\!sn^{p}\!>\!CW$ and therefore can be safely evicted from $\mathcal{D}_{w}$.
\end{theorem}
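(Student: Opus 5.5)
The plan is to show that, once $\mathcal{D}_{w}.sn_{\max}-sn^{p}>CW$, the column $sn^{p}$ of $\mathbb{D}$ can never again acquire a row with pivot $sn^{p}$. First we show $p$ is unrecoverable; safe eviction then follows in the style of Theorem~\ref{thm:decoded}. The argument splits into two parts: information already in $\mathcal{D}_{w}$, and information that may arrive later.

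\emph{Future packets.} Since $\mathcal{D}_{w}.sn_{\max}$ is the sequence number of a packet the encoder has already generated, the encoder's coding window satisfies $\mathcal{C}_{w}.sn_{\max}\ge\mathcal{D}_{w}.sn_{\max}$ at the current and every later time. Combined with the constraint $\mathcal{C}_{w}.sn_{\max}-\mathcal{C}_{w}.sn_{\min}<CW$, this gives $\mathcal{C}_{w}.sn_{\min}>sn^{p}$. Hence $p$ has left $\mathcal{C}_{w}$ for good, and no coded packet arriving from now on has a nonzero coefficient at column $sn^{p}$. The native packet $p$ itself cannot arrive later, since it was transmitted (and lost) long ago. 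This is the same observation used in the proof of Lemma~\ref{lm:not_useful}.

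\emph{Current contents.} Because $p$ is MISSING, no row of $\mathbb{D}$ has its pivot at $sn^{p}$. Since $\mathbb{D}$ is in reduced row echelon form, the only rows that may carry a nonzero entry at column $sn^{p}$ are PARTIALLY\_DECODED rows $r$ with $sn^{r}<sn^{p}$, where $sn^{p}$ is a non-pivot (free) column. For $p$ to be recovered, some row would have to become $c\,p$ with a single nonzero entry at column $sn^{p}$. Row operations preserve the row space, so this requires the unit vector $e_{sn^{p}}$ to lie in the span of the current rows together with all future rows. Every future row is zero at column $sn^{p}$, and no current row has a pivot there. Reducing any such combination against the existing pivots therefore cannot isolate column $sn^{p}$ without also eliminating, at pivot columns, the contributions of the older rows that involve it.

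The main obstacle is making this last step rigorous, because future packets do add new pivots in columns other than $sn^{p}$. My first attempt is a rank argument. Let $W$ be the span of all current and future rows, and restrict attention to columns $\le sn^{p}$. Future rows contribute nothing on these columns, since they are zero at every column up to $\mathcal{C}_{w}.sn_{\min}-1\ge sn^{p}$. So the projection of $W$ onto columns $\le sn^{p}$ equals the projection of the current rows. In that projection, column $sn^{p}$ is free relative to the current pivots, so $e_{sn^{p}}$ restricted to those columns is not in it. Consequently $e_{sn^{p}}\notin W$, and $p$ is unrecoverable.

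Finally, eviction is safe. Since $p$ can never be decoded, it will never be delivered. Removing it therefore loses no deliverable packet and cannot create out-of-order delivery. Its row is identically zero, so removing it does not affect the decoding of any other packet, as in Lemma~\ref{lm:not_useful}.
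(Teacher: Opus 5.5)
Your proof is correct and follows the paper's route. Since $p$ has permanently left $\mathcal{C}_{w}$, no future packet touches column $sn^{p}$, so $p$ is unrecoverable; eviction is then safe because $p$ will never be delivered and, by Lemma~\ref{lm:not_useful}, cannot help decode any other packet. Your projection onto columns $\le sn^{p}$ adds something the paper lacks: it uses the fact that future rows vanish on \emph{all} such columns, not just on column $sn^{p}$, and so rigorously justifies the paper's bare claim that row $sn^{p}$ ``will remain all-zero'' even when older PARTIALLY\_DECODED rows have a nonzero entry in that column.
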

\begin{IEEEproof}
Since $\mathcal{D}_{w}.sn_{\max}\!-\!sn^{p}\!>\!CW$, $p$ is outside the coding window, so no subsequently received coded packet $c$ will contain $p$, i.e., every incoming coded packet $c$ will have a zero coefficient at column $sn^{p}$. Thus, row $sn^{p}$ in $\mathbb{D}$ will remain all-zero, rendering $p$ unrecoverable. Thus it cannot affect the in-order delivery process. Furthermore, according to Lemma~\ref{lm:not_useful}, $p$ cannot be used to decode any other packet in $\mathcal{D}_{w}$. 
\end{IEEEproof}
\noindent Moving a step further, we can also identify unrecoverable PARTIALLY\_DECODED packets:
\begin{theorem}\label{thm:unrecoverable_partially_decoded}
A PARTIALLY\_DECODED packet in $\mathcal{D}_{w}$ that is older than an unrecoverable MISSING packet is also unrecoverable and can be safely evicted.
\end{theorem}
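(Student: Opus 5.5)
Let $x$ be a PARTIALLY\_DECODED packet with pivot $sn^{x}$, and let $m$ be a MISSING packet with $sn^{x}<sn^{m}$ that is unrecoverable in the sense of Theorem~\ref{thm:missing}, i.e., $\mathcal{D}_{w}.sn_{\max}-sn^{m}>CW$. The plan is to show two things. First, $x$ can never reach the DECODED state. Second, $x$ plays no useful role in decoding any other row or in in-order delivery. Eviction is then harmless.

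\emph{Step 1 (the columns of $x$ are frozen).} Since $sn^{x}<sn^{m}$, we have $\mathcal{D}_{w}.sn_{\max}-sn^{x}>CW$. So every native packet with sequence number at most $sn^{m}$ lies outside $\mathcal{C}_{w}$. Every subsequently arriving coded packet therefore has zero coefficients in all columns up to $sn^{m}$. This includes column $sn^{x}$ and every column strictly between $sn^{x}$ and $sn^{m}$. Combined with the reduced row echelon form, this means that every row that is or will be in $\mathbb{D}$ has a zero entry at column $sn^{x}$. The same holds at any pivot column below $sn^{m}$, other than the row owning that pivot.

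\emph{Step 2 (no reduction is ever possible).} The entries of $x$ can change only through elimination by rows whose pivots lie at the non-zero columns of $x$. I would split these columns into two groups. For columns $\le sn^{m}$, the situation is frozen by Step 1: no new pivots can appear there, because any new row has zeros in that range. Hence any non-zero entry of $x$ there stays forever. For columns $>sn^{m}$, future rows may still eliminate entries. The key observation is that $x$ contains several native packets, all of them old and outside $\mathcal{C}_{w}$ in the relevant range.

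This is the main obstacle. It is not immediately clear that $x$ really has a non-pivot non-zero entry at a column $\le sn^{m}$. The paper's intended argument seems to be that a MISSING column between $sn^{x}$ and later columns blocks resolution. Since the statement presumably quantifies over every PARTIALLY\_DECODED packet older than $m$, I would argue as follows. Either $x$ has an unresolvable non-pivot entry in the frozen range $\le sn^{m}$, and then it is permanently stuck. Or all of its remaining entries sit above $sn^{m}$. In the latter case the fallback is delivery-based rather than decodability-based: even if $x$ later decoded, it could never be delivered in-order. That is because $m$, which lies between the last delivered packet and any later packet, never becomes DECODED (Theorem~\ref{thm:missing}), so Algorithm~\ref{alg:packet_delivery} always stops at or before $m$.

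Once $m$ is evicted under EEDM, packets after it are handled by the window shift. For $x$ itself, $sn^{x}<sn^{m}$, so $x$ precedes $m$ in the order. I would tighten this case by invoking the rank deficiency of the frozen block: the columns $\le sn^{m}$ contain the all-zero row $m$. The frozen submatrix therefore has fewer pivots than unknowns, and column $sn^{m}$ is a free variable. I would then try to show that $x$'s non-zero support within this block must touch such a free column. This is the step I expect to need the most care.

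\emph{Step 3 (usefulness).} Finally, I would apply Lemma~\ref{lm:not_useful} to $x$, since $\mathcal{D}_{w}.sn_{\max}-sn^{x}>CW$. It gives that $x$ cannot help decode any other packet. Unrecoverability together with uselessness yields safe eviction, exactly as in the proof of Theorem~\ref{thm:missing}.
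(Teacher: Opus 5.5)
\paragraph{Assessment.} Your Step~1, the ``frozen range'' half of Step~2, and Step~3 are sound and match the paper's argument. A non-zero non-pivot entry of $x$ at a column $s'\le sn^{m}$ lies in a pivotless column, by reduced row echelon form. That column is also outside $\mathcal{C}_{w}$, so the entry can never be cleared; this is the paper's Case~1. Lemma~\ref{lm:not_useful} then disposes of usefulness.

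\paragraph{The gap.} The gap is exactly the case you flag: every non-pivot non-zero entry of $x$ lies at columns $>sn^{m}$. Neither of your fallbacks closes it.
\begin{itemize}
\item The delivery-based argument is wrong. Since $sn^{x}<sn^{m}$, Algorithm~\ref{alg:packet_delivery} reaches $x$ before it ever reaches $m$, so $m$ cannot block $x$'s delivery. Non-deliverability would not establish unrecoverability anyway.
\item The rank-deficiency idea cannot work either. That column $sn^{m}$ is free in the frozen block says nothing about whether row $x$ has a non-zero entry there.
\end{itemize}

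\paragraph{The missing idea.} What you are missing is the paper's Case~2. If $x$ has a non-zero entry at some $s'\ge sn^{m}$, then $sn^{x}<sn^{m}\le s'$. A sliding-window coded packet mixes a contiguous range of native packets, so column $sn^{m}$ lies inside the span of $x$. The paper then takes $x$'s coefficient at $sn^{m}$ to be non-zero, justifying this in a footnote: a zero there has negligible probability in $GF(2^8)$, or can be avoided at the encoder. Since $m$ is MISSING, no pivot can clear that entry and no future packet contains $m$, so $x$ is stuck.

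\paragraph{Why the assumption is essential.} This genericity assumption is indispensable, and no purely structural argument can replace it. Take $CW=60$ and two received coded packets over $[0,59]$, with natives $0,3,5,50$ lost and all other natives received. After reduction, the row with pivot $0$ is $p_{0}+a\,p_{5}+b\,p_{50}$. Here $a$ is proportional to the $2\times 2$ minor of the two packets' coefficients on columns $\{3,5\}$. That minor can vanish even when every encoder coefficient is non-zero, so the encoder-side fix in the footnote does not fully remove the issue.

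Suppose $a=0$ and later $\mathcal{D}_{w}.sn_{\max}=66$. Then $m=5$ is unrecoverable, and $x$ is PARTIALLY\_DECODED and older than $m$. Yet $x$'s only remaining non-pivot entry is at column $50$, which is still inside $\mathcal{C}_{w}$. A later coded packet can recover $p_{50}$, and $x$ becomes DECODED.

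\paragraph{How to finish.} To complete your proof you must adopt the same non-zero-coefficient assumption at column $sn^{m}$, so the statement holds with high probability rather than deterministically. Note that your frozen-range argument alone already proves the Corollary's rule deterministically, since that rule tests the left-most non-pivot column.
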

\begin{IEEEproof}
Let us consider a PARTIALLY\_DECODED packet $q$ in $\mathcal{D}_{w}$ that is older than an unrecoverable MISSING packet $p$. Since $q$ is still not fully decoded, row $sn^{q}$ in $\mathbb{D}$ contains at least two non-zero coefficients. The first is in column $sn^{q}$ (the pivot) and the second in a column $s'$.\\ 
\textbf{Case 1 ($s'\!<\!sn^{p}$)}: The non-zero coefficient at column $s'$ implies row $s'$ is missing the pivot, i.e., the packet $c$ with sequence number $s'$ is missing. This is because, if $c$ had already been received or decoded, the decoding algorithm would have immediately used the pivot at row $s'$ to clear column $s'$ in row $sn^{q}$ so that $\mathbb{D}$ is kept in reduced row echelon form. Moreover, the distance of $s'$ from the maximum sequence number in $\mathcal{D}_{w}$ is $\mathcal{D}_{w}.sn_{\max}\!-\!s'$ and since $s'\!<\!sn^{p}$, we have $\mathcal{D}_{w}.sn_{\max}\!-\!s'\!>\!\mathcal{D}_{w}.sn_{\max}\!-\!sn^{p}$. But $\mathcal{D}_{w}.sn_{\max}\!-\!sn^{p}\!>\! CW$ because $p$ is unrecoverable (Theorem~\ref{thm:missing}). Thus, $\mathcal{D}_{w}.sn_{\max}\!-\! s'\!>\!CW$. This means that $c$ is outside $\mathcal{C}_{w}$ and therefore no subsequent coded packet will contain $c$. As a result, there is no way to remove $c$ from $q$ and therefore $q$ is unrecoverable.\\
\textbf{Case 2 ($s' \geq sn^{p}$)}: In this case we know that $sn^{q}\!<\!sn^{p}$ because $q$ is older than $p$. 
Combining $s' \geq sn^{p}$ and $sn^{q} < sn^{p}$ yields the inequality chain $sn^{q} < sn^{p} \leq s'$. This indicates that the coefficient at column $sn^{p}$ is included in $q$.
Since $p$ is MISSING, row $sn^{p}$ is missing its pivot. So, there is no way that the decoding algorithm could have used it to reduce to zero the coefficient at column $sn^{p}$ in row $sn^{q}$\footnote{
While the probability of a zero coefficient in $\text{GF}(2^8)$ is non-zero ($1/256$), the joint probability of a PARTIALLY\_DECODED packet having a zero coefficient at column $sn^p$ while containing only two non-zero coefficients ($sn^q$ and $s'$) is negligible in practice. Alternatively, explicitly avoiding zero coefficients during coded packet creation completely eliminates the issue with negligible impact on linear independence.}. 
Furthermore, $\mathcal{D}_{w}.sn_{\max}-sn^{p}>CW$ (since $p$ is unrecoverable), therefore no subsequent coded packet will contain $p$. As a result, there is no way to remove $p$ from $q$ and therefore $q$ is unrecoverable.
Finally, since $sn^{q}<sn^{p}$ (older packet) it follows that $\mathcal{D}_{w}.sn_{\max}\!-\! sn^{q}\!>\!CW$ and according to Lemma~\ref{lm:not_useful}, $q$ cannot be used to decode any other packet in $\mathcal{D}_{w}$. So eviction is safe.
\end{IEEEproof}

\begin{algorithm}[t]
\caption{\emph{EEDM} Eviction Strategy \& Delivery to Upper Layers}
\label{alg:eedm_decoder}
\begin{algorithmic}[1]
\REQUIRE $\mathcal{D}_{w}$, $CW$, $DW$
\WHILE{$\mathcal{D}_{w}.sn_{\max} - \mathcal{D}_{w}.sn_{\min} > CW$}
    \STATE $p \gets$ packet in $\mathcal{D}_{w}$ with sequence number $\mathcal{D}_{w}.sn_{\min}$
    \IF {state($p$) = PARTIALLY\_DECODED}
        \STATE $sn_{nz}$ $\gets$ seq num of first non-zero coefficient after pivot in $p$
    \ENDIF
    \IF{$\big[(\mathcal{D}_{w}.sn_{\max}\!-\!\mathcal{D}_{w}.sn_{\min}\!+\!1\!>\!DW$ \\
    \OR state($p$) = DECODED \OR state($p$) = MISSING \\
    \OR $($state($p$) = PARTIALLY\_DECODED \AND $\mathcal{D}_{w}.sn_{\max}-sn_{nz}>CW)\big]$}
        \IF{state($p$) = DECODED \AND $sn^{p} > last\_delivered$}
            \STATE Deliver $p$
            \STATE $last\_delivered \gets sn^{p}$
        \ENDIF
        \STATE Evict packet $p$ from $\mathcal{D}_{w}$
        \STATE $\mathcal{D}_{w}.sn_{\min} \gets \mathcal{D}_{w}.sn_{\min} + 1$        
    \ELSE 
        \STATE \textbf{break} 
    \ENDIF
\ENDWHILE
\end{algorithmic}
\end{algorithm}
While Theorem~\ref{thm:unrecoverable_partially_decoded} effectively identifies which PARTIALLY\_DECODED packets could be evicted from $\mathcal{D}_{w}$, its implementation is not straightforward. This is because packets are examined from the oldest to the newest, thus, a PARTIALLY\_DECODED packet may be examined before the MISSING one that renders it unrecoverable. For a practical implementation, we utilize an efficient eviction rule:
\begin{corollary}
A PARTIALLY\_DECODED packet $q$ for which the left-most non-zero coefficient besides the pivot is in a column with sequence number $s_{m}$ such that $\mathcal{D}_{w}.sn_{max}\!-\!s_{m}\!>\!CW$ is unrecoverable and can be safely evicted from $\mathcal{D}_{w}$.
\end{corollary}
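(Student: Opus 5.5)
The plan is to reduce the corollary to the argument of Case~1 in the proof of Theorem~\ref{thm:unrecoverable_partially_decoded}. That argument never used the MISSING packet $p$ except to certify that the column $s'$ lies outside the coding window. Here that certification is given directly by the hypothesis $\mathcal{D}_{w}.sn_{\max}-s_{m}>CW$.

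Let $q$ be PARTIALLY\_DECODED, with pivot at column $sn^{q}$ and left-most non-pivot non-zero coefficient at column $s_{m}>sn^{q}$. The steps are as follows.

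\textbf{Step 1.} I will show that the native packet with sequence number $s_{m}$ has no pivot row in $\mathbb{D}$. Suppose row $s_{m}$ held a pivot, whether the packet was received natively or produced by decoding. Then the reduced row echelon form invariant maintained by the Kodo algorithm would already have cleared column $s_{m}$ in every other row, including row $sn^{q}$, which contradicts the choice of $s_{m}$. So column $s_{m}$ is a non-pivot column of $\mathbb{D}$, and the row $s_{m}$ is MISSING in the sense of the system model.

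\textbf{Step 2.} Since $\mathcal{D}_{w}.sn_{\max}-s_{m}>CW$, the native packet $s_{m}$ is outside $\mathcal{C}_{w}$. By Theorem~\ref{thm:missing} it is therefore unrecoverable: no future coded packet has a non-zero coefficient at column $s_{m}$, so no future row can ever supply a pivot there.

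\textbf{Step 3.} Next I will argue that the non-zero coefficient of $q$ at column $s_{m}$ can never be eliminated. Any row operation on row $sn^{q}$ adds a multiple of some other row. Existing rows have zero at column $s_{m}$, because it is a non-pivot column and nothing else can carry a pivot there. Future rows also have zero there, by Step~2. Hence the entry of row $sn^{q}$ at column $s_{m}$ is invariant, so $q$ can never reach the DECODED state. This is the unrecoverability claim.

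\textbf{Step 4.} Safety of eviction has two parts. Since $sn^{q}<s_{m}$, we get $\mathcal{D}_{w}.sn_{\max}-sn^{q}>CW$, and Lemma~\ref{lm:not_useful} shows that $q$ is of no use in decoding any other packet. Because $q$ is never DECODED, it will never be delivered, so evicting it cannot create an out-of-order delivery. I will also note that the same conclusion follows from Theorem~\ref{thm:unrecoverable_partially_decoded} applied to $q$ and the unrecoverable MISSING packet $s_{m}$, since $q$ is older than $s_{m}$.

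The main obstacle is Step~3. One must rule out that some combination of rows, possibly involving other PARTIALLY\_DECODED rows with non-zero entries at $s_{m}$, cancels the coefficient. My first attempt is this: in a reduced row echelon matrix, no row has its pivot at the non-pivot column $s_{m}$. Any elimination of that entry would therefore require creating a pivot at $s_{m}$, which requires a future packet containing the native packet $s_{m}$, and Step~2 rules that out. The same invariant shows that the pivot-less entry at $s_{m}$ persists in every row containing it.
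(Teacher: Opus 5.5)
Your route is the paper's: the paper gets the corollary by running the Case~1 argument of Theorem~\ref{thm:unrecoverable_partially_decoded} with $s'=s_{m}$ and inheriting safe eviction from Lemma~\ref{lm:not_useful}. Your Steps~1--2 show that $s_{m}$ is itself an unrecoverable MISSING packet newer than $q$, so the theorem applies as stated; this makes the reduction cleaner. Steps~1, 2 and~4 are sound.

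The gap is in Step~3, exactly where you suspected it. The claim that existing rows vanish at column $s_{m}$ ``because it is a non-pivot column'' is false. Reduced row echelon form clears only pivot columns, and a free column can be non-zero in several PARTIALLY\_DECODED rows; for example, $e_{a}+\alpha e_{s_{m}}$ and $e_{b}+\beta e_{s_{m}}$ with $a,b<s_{m}$ can coexist. Your fallback, that cancelling the entry at $s_{m}$ requires a pivot at $s_{m}$, is also false as stated. Subtracting $\alpha/\beta$ times the second row from the first cancels that entry without any pivot at $s_{m}$. What actually fails is that the result, $e_{a}-(\alpha/\beta)e_{b}$, is not a unit vector, and your argument does not capture this.

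The missing idea is that everything on columns $\le s_{m}$ is frozen.
\begin{itemize}
\item Every future packet is supported on columns $>s_{m}$. Step~2 covers coded packets, and native packets are newer than $\mathcal{D}_{w}.sn_{\max}$.
\item In forward elimination, a new packet is reduced only by rows whose pivot lies in its support, hence beyond $s_{m}$. Such rows vanish on all columns preceding their pivot, so the reduced row is still supported on columns $>s_{m}$.
\item Row $sn^{q}$ is only ever modified by subtracting multiples of such reduced new rows. Its entries on columns $\le s_{m}$, including the non-zero one at $s_{m}$, therefore never change.
\end{itemize}

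For an algorithm-independent version, argue on the span. Suppose $e_{sn^{q}}=\sum_i c_i r_i+\sum_j d_j f_j$, with $r_i$ the current rows and $f_j$ future packets, and project onto columns $\le s_{m}$. The $f_j$ vanish, as does every $r_i$ with pivot beyond $s_{m}$. Evaluating at the remaining pivot columns forces $c_i=0$ for every remaining row except row $sn^{q}$. Then $e_{sn^{q}}$ would be a multiple of row $sn^{q}$ on these columns, contradicting the non-zero entry at $s_{m}$.

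The paper's own case analysis (``there is no way to remove $c$ from $q$'') glosses over the same point. With this repair, your proof is the more complete one.
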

This corollary follows directly from Theorem~\ref{thm:unrecoverable_partially_decoded}, whose proof establishes that a PARTIALLY\_DECODED packet is unrecoverable if it has a nonzero coefficient at some column $s'$ with $\mathcal{D}_{w}.sn_{max}\!-\!s'\!>\!CW$. Since $s_{m}$ denotes the left-most nonzero coefficient of $q$ besides the pivot, it satisfies $s_{m} \le s'$ for any such column $s'$, and therefore $\mathcal{D}_{w}.sn_{max}-s_{m} \ge \mathcal{D}_{w}.sn_{max} - s' > CW$. Hence the hypothesis of the corollary, i.e. $\mathcal{D}_{w}.sn_{max}-s_{m} > CW$, is itself sufficient to establish unrecoverability, and safe eviction follows as in Theorem~\ref{thm:unrecoverable_partially_decoded}. The rule can serve as an immediate buffer dropping criterion and is highly elegant because it reduces an otherwise complex check down to a single, 
local O(1) test per row.

In general, \emph{EEDM} utilizes the same mechanisms as \emph{EED} for packet admission and delivery of packets to upper layers but replaces the eviction mechanism (Fig.~\ref{fig:eed_eedm_blocks}).
Theorems~\ref{thm:decoded},~\ref{thm:missing} and~\ref{thm:unrecoverable_partially_decoded} describe the rules that constitute the 
eviction scheme used in \emph{EEDM}. The pseudocode of this eviction scheme is provided in Algorithm~\ref{alg:eedm_decoder}.
An important observation is that, contrary to \emph{EED} (Algorithm~\ref{alg:eed_decoder}), here, when removing a packet from $\mathcal{D}_{w}$, we may need to first forward it to the upper layers (see lines 9-12, Algorithm~\ref{alg:eedm_decoder}). This is required when the status of the packet changes to "in-order" just after a previous deletion (e.g., a deletion of a MISSING or a PARTIALLY\_DECODED packet). 

Crucially, the proposed eviction criteria are predicated upon the presence of unrecoverable MISSING packets within the decoding matrix. This characteristic can be strategically exploited by executing Algorithm 4 strictly when such packets emerge, thereby guaranteeing that each row in the decoding matrix is evaluated at most once. Furthermore, because the individual eviction criteria require only $\mathcal{O}(1)$ operations, their computational overhead is trivial compared to the penalty of non-eviction, which would force the retention of eviction-eligible packets and their redundant inclusion in multiple subsequent decoding operations.
Consequently, in the ensuing analysis, the computational cost of executing Algorithm~\ref{alg:eedm_decoder} is considered negligible relative to the overall complexity of the decoding process.

\section{EVALUATION}\label{evaluation}

\subsection{Simulation Setup and Performance Metrics}
We evaluate the performance of the proposed decoding window management strategies through extensive simulations conducted using a custom time-slotted SW-RLNC module developed on top of the main API of Steinwurf’s PyErasure tool~\cite{pyerasure}. The simulation setup is built to reflect the system model introduced in Section~\ref{system-model}. The system is evaluated under a range of code rates $R$. To keep the parameter space realistic, we set $CW=60$ packets, a mid-range value commonly used in literature~\cite{tasdemir2021sparec} that provides a good balance between decoding efficiency and computational complexity. At the receiver side, the Kodo decoding algorithm~\cite{iccw2009_heide,networking2011_heide} is employed to resolve packet erasures across various $DW$ values. 

We compare the two proposed decoding window management strategies, namely \emph{EED} and \emph{EEDM},
against the conventional \textit{Evict-Oldest-on-Full} (\emph{EOF}) policy, which serves as the comparative benchmark. The \emph{EOF} policy is the state-of-the-art approach widely adopted in the literature~\cite{wunderlich2017_caterpillar,wunderlich2017_wireless}, where packets are evicted from the decoding window only when they reach the last position of the decoding window, regardless of their state. It is equivalent to Algorithm~\ref{alg:eed_decoder} 
except that it omits the second condition in line~\ref{condition}.

To capture the temporally correlated and bursty nature of wireless fading channels, we adopt a discrete-time Gilbert-Elliott (GE) channel framework~\cite{gilbert1960capacity}. The GE channel is modeled as a first-order, two-state Markov chain alternating between a Good state ($G$), characterized by error-free packet transmission, and a Bad state ($B$), wherein all packets are dropped. 
The state transition dynamics are governed by the probabilities $q_{GB}$ (transitioning from $G$ to $B$) and $q_{BG}$ (transitioning from $B$ to $G$). Accordingly, the stationary probability of the channel residing in the Bad state yields the average packet erasure rate $p_{\text{loss}}$, expressed as: $\pi_B = \frac{q_{GB}}{q_{GB} + q_{BG}} = p_{\text{loss}}$. Furthermore, the average error burst length, denoted as $\mathbb{E}[B]$, is structurally determined by these transition probabilities and is given by $\mathbb{E}[B] = \frac{1}{q_{BG}}$.

The simulation parameters are summarized in Table~\ref{tab:parameters}. Unless otherwise specified, the default values are used for all parameters. Each simulation is executed for a duration of $50000$ packet transmissions, and the results are averaged over 10 independent runs, which yields a worst-case $95\%$ confidence interval of $3\%$.
\begin{table}[t]
\centering
\caption{Simulation Parameters}
\label{tab:parameters}
\begin{tabular}{@{}lcc@{}}
\toprule
\textbf{Parameter} & \textbf{Range of Values} & \textbf{Default Value}\\
\midrule
$R$  & \(\{3/5,2/3,5/7,3/4,4/5\}\) & \(4/5\)\\
$CW$ & 60 & 60 \\
$DW$ & $CW$, 1.5$CW$, 2$CW$, 3$CW$ & 1.5$CW$\\   
$p_{loss}$      & $5\%, 10\%, 20\%$ & $10\%$\\
$E[B]$           & 4, 8, 12 slots & 8 slots\\
Transmission Mode & Systematic, Full Vector & Systematic \\
Packet Size & 1000 bytes & 1000 bytes \\
Simulation Duration & 50000 packets & 50000 packets \\
\bottomrule
\end{tabular}
\end{table}

We evaluate the performance of the three decoding window management policies using the following metrics:
\begin{itemize}[leftmargin=1.5em, itemindent=0pt]
    \item \emph{Delivery Ratio:} The proportion of successfully fully recovered 
    packets delivered to the destination relative to the total number of transmitted native packets. This metric quantifies the overall reliability and efficacy of the erasure correction process.
    \item \emph{Avg. In-Order Delivery Delay:} The mean time elapsed between the initial transmission of a native packet and its sequential, in-order delivery to the application layer. This metric characterizes the latency overhead induced by the decoding process, which is a critical performance factor for real-time applications. 
    \item \emph{Avg. Decoding Matrix Size:} The average number of packets retained within the decoding matrix. This metric serves as a measure of the decoder's memory footprint and hardware storage overhead.            
    \item \emph{Avg. Finite Field (FF) Operations:} The mean number of arithmetic operations (i.e., additions and multiplications over $GF(2^8)$) executed per decoding attempt. This metric assesses the computational complexity imposed by the decoder, which varies as a function of both the decoding window configuration and the management policy.
\end{itemize}

The evaluation is structured around three distinct experimental campaigns, each targeting a specific dimension of the decoding framework. The first experiment evaluates the resilience and effectiveness of the proposed strategies under varying channel conditions, from mild to severe, to assess their robustness and adaptability in real-world wireless environments.
The second experiment investigates the interplay between the management strategy and the decoding window size, highlighting the scaling of performance gains. Finally, the third experiment characterizes the performance impact of these management strategies across different transmission paradigms, specifically comparing the systematic and full-vector operational modes.

\subsection{Impact of Channel Conditions on Performance}
In this experiment we aim at capturing the performance of the proposed decoding window management strategies for different code rates and under a range of channel conditions. Specifically, we evaluate four channel configurations, summarized in Table~\ref{tab:ge_channel_settings}.
\begin{table}[b]
\centering
\caption{Gilbert--Elliott Channel Parameter Settings}
\label{tab:ge_channel_settings}
\begin{tabular}{@{}lcclcc@{}}
\toprule
Setting & $p_{loss}$ & $E[B]$ & Setting & $p_{loss}$ & $E[B]$\\
\midrule
(i) Mild            & 0.05 & 4  & (iii) Challenging & 0.2 & 12\\
(ii) Intermediate   & 0.10 & 8  & (iv) Burst-isolating & 0.1 & 12\\
\bottomrule
\end{tabular}
\end{table}
Settings (i)-(iii) jointly increase $p_{\text{loss}}$ and $E[B]$ to span a range of operating conditions from mild (Fig.~\ref{fig:expB-easychannel}) to challenging (Fig.~\ref{fig:expB-hardchannel}), with setting (ii) (Fig.~\ref{fig:expB-averagechannel}) representing an intermediate case. This joint scaling reflects the fact that, in many practical wireless and satellite links, higher average loss rates are empirically associated with longer fade/burst durations. Setting (iv) (Fig.~\ref{fig:expB-averageburstychannel}) is designed to isolate the effect of burst structure from that of average loss severity. By construction, it shares its loss rate with setting (ii) and its mean burst duration with setting (iii), enabling two complementary controlled comparisons: (ii) vs. (iv) attributes performance differences to burstiness alone at fixed $p_{\text{loss}}$, while (iii) vs. (iv) attributes differences to loss severity alone at fixed $E[B]$.

{
\begin{figure*}[t]
\centering
\subfloat[\label{subfig:expB-easychannel-delay}]{
    \includegraphics[width=0.325\textwidth]{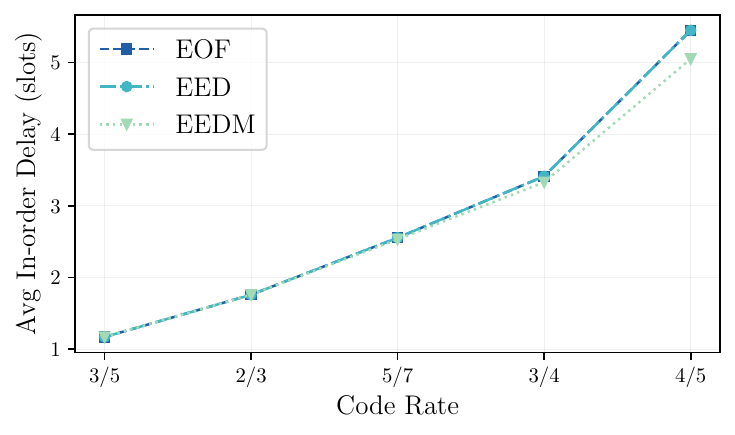}
}
\subfloat[\label{subfig:expB-easychannel-dm-size}]{
    \includegraphics[width=0.325\textwidth]{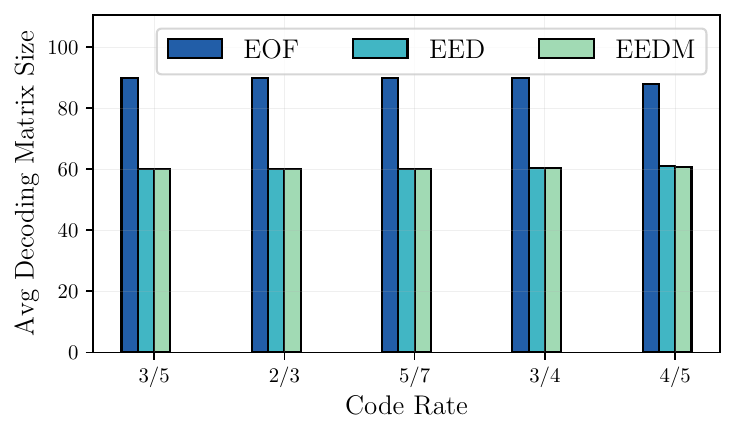}
}
\subfloat[\label{subfig:expB-easychannel-ff-ops}]{
    \includegraphics[width=0.325\textwidth]{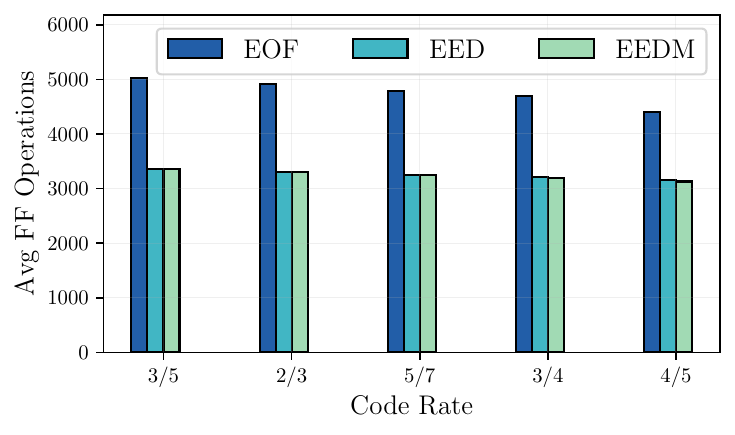}
}
\caption{Performance of \emph{EOF}, \emph{EED} and \emph{EEDM} for different code rates. Systematic coding, $p_{\text{loss}}=5\%$, $E[B]=4$, $DW=1.5CW$.}
\label{fig:expB-easychannel}
\end{figure*}
\begin{figure*}[t]
\centering
\subfloat[\label{subfig:expB-averagechannel-delay}]{
    \includegraphics[width=0.325\textwidth]{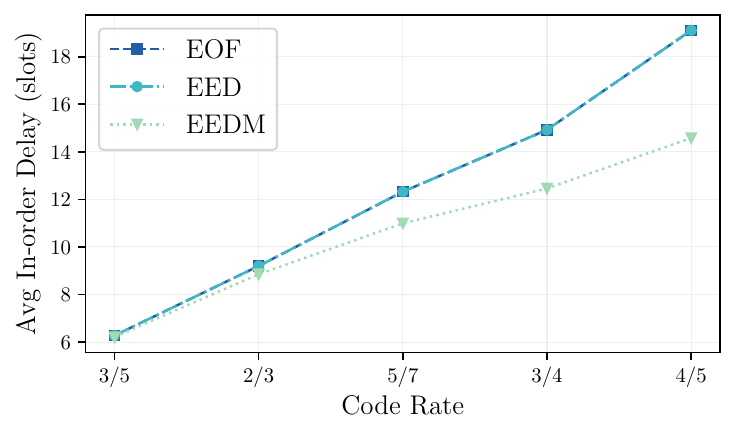}
}
\subfloat[\label{subfig:expB-averagechannel-dm-size}]{
    \includegraphics[width=0.325\textwidth]{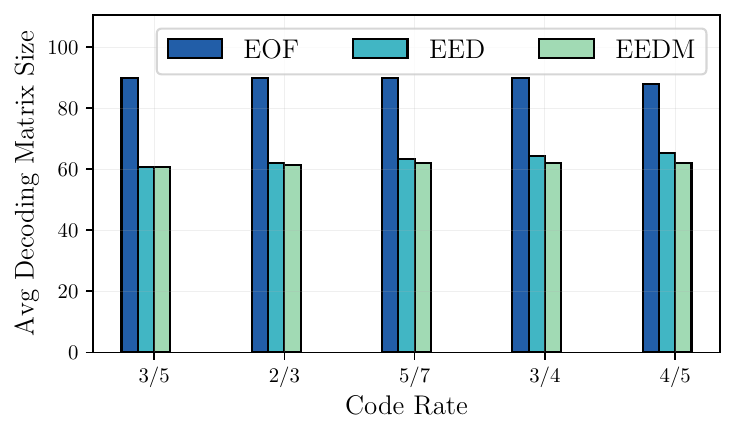}
}
\subfloat[\label{subfig:expB-averagechannel-ff-ops}]{
    \includegraphics[width=0.325\textwidth]{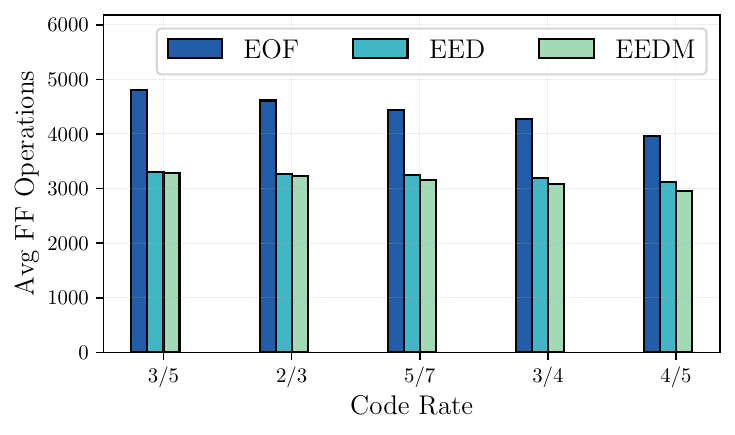}
}
\caption{Performance of \emph{EOF}, \emph{EED} and \emph{EEDM} for different code rates. Systematic coding, $p_{\text{loss}}=10\%$, $E[B]=8$, $DW=1.5CW$.}
\label{fig:expB-averagechannel}
\end{figure*}
\begin{figure*}[t]
\centering
\subfloat[\label{subfig:expB-hardchannel-delay}]{
    \includegraphics[width=0.325\textwidth]{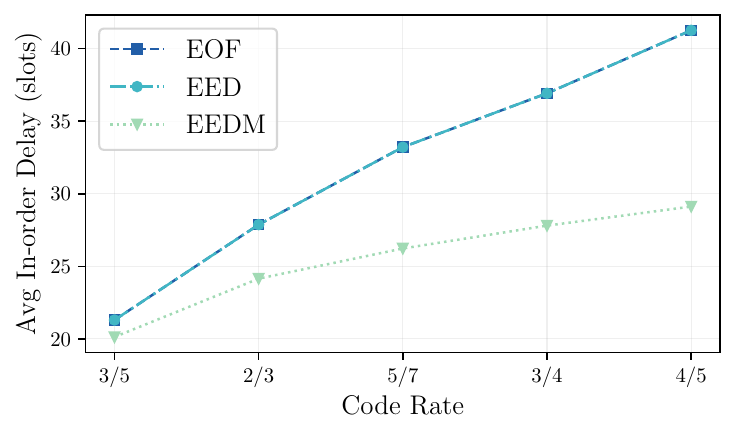}
}
\subfloat[\label{subfig:expB-hardchannel-dm-size}]{
    \includegraphics[width=0.325\textwidth]{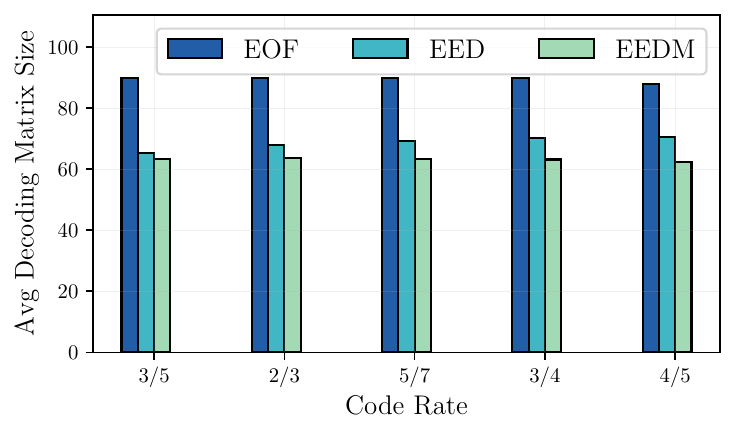}
}
\subfloat[\label{subfig:expB-hardchannel-ff-ops}]{
    \includegraphics[width=0.325\textwidth]{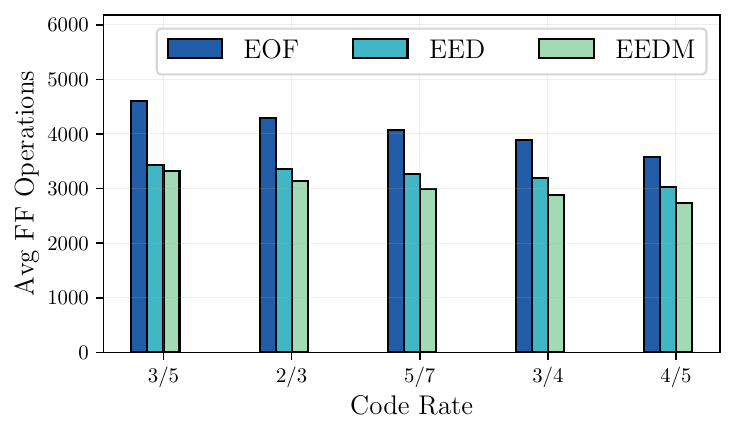}
}
\caption{Performance of \emph{EOF}, \emph{EED} and \emph{EEDM} for different code rates. Systematic coding, $p_{\text{loss}}=20\%$, $E[B]=12$, $DW=1.5CW$.}
\label{fig:expB-hardchannel}
\end{figure*}
\begin{figure*}[t]
\centering
\subfloat[\label{subfig:expB-averageburstychannel-delay}]{
    \includegraphics[width=0.325\textwidth]{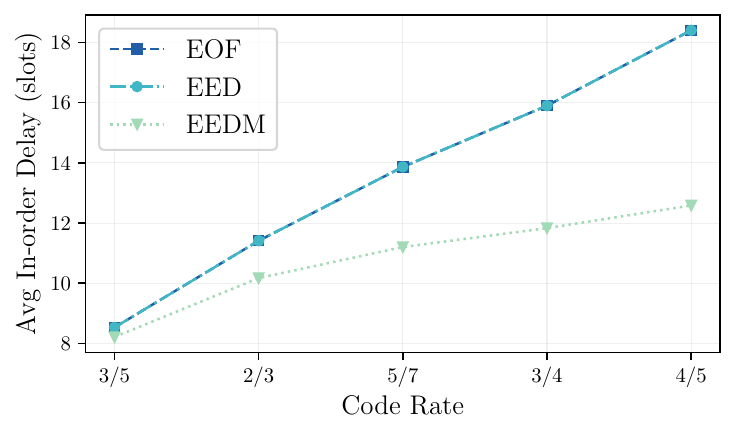}
}
\subfloat[\label{subfig:expB-averageburstychannel-dm-size}]{
    \includegraphics[width=0.325\textwidth]{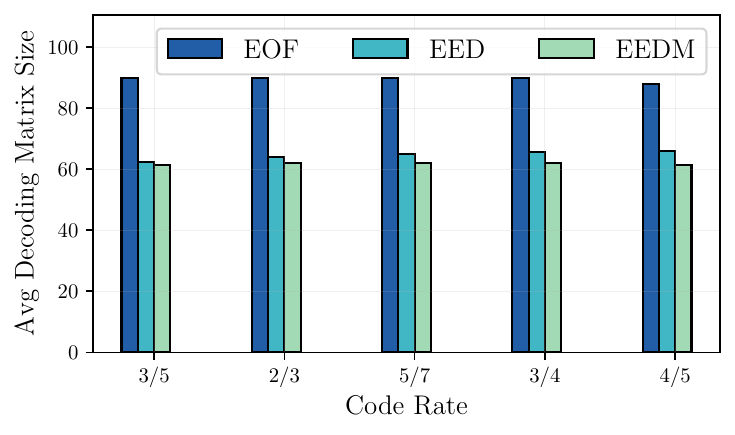}
}
\subfloat[\label{subfig:expB-averageburstychannel-ff-ops}]{
    \includegraphics[width=0.325\textwidth]{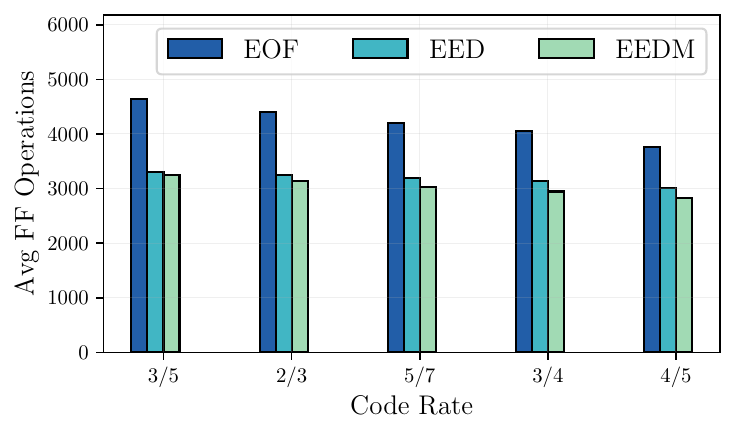}
}
\caption{Performance of \emph{EOF}, \emph{EED} and \emph{EEDM} for different code rates. Systematic coding, $p_{\text{loss}}=10\%$, $E[B]=12$, $DW=1.5CW$.}
\label{fig:expB-averageburstychannel}
\end{figure*}
}
Figures \ref{fig:expB-easychannel}-\ref{fig:expB-averageburstychannel} report the performance of the \emph{EOF} algorithm and the two proposed strategies, \emph{EED} and \emph{EEDM}, across the four channel configurations and for different code rates. Each subfigure corresponds to one performance metric: average in-order delivery delay, average decoding matrix size and average number of FF operations per decoding attempt. Across all considered channel settings and code rates, the three algorithms achieve statistically indistinguishable delivery ratios. This confirms that neither \emph{EED} nor \emph{EEDM} trades off decoding efficiency for the complexity and delay improvements discussed below; the proposed window management strategies operate strictly on packets that are either already delivered or provably non-recoverable, and therefore do not affect the receiver's ability to eventually recover all decodable packets. For brevity, the corresponding results are omitted.

Both \emph{EED} and \emph{EEDM} substantially reduce the average decoding matrix size relative to the \emph{EOF} algorithm (Fig.~\ref{subfig:expB-easychannel-dm-size}-~\ref{subfig:expB-averageburstychannel-dm-size}). In the \emph{EOF} algorithm, the matrix size is, by construction, always close to $DW$, since packets are retained until they exit the decoding window regardless of their decoding status. In contrast, under \emph{EED} and \emph{EEDM} the average matrix size closely tracks $CW$, with only a modest increase observed under more challenging channel conditions — that is, more severe loss rates (Fig.~\ref{fig:expB-easychannel} through Fig.~\ref{fig:expB-hardchannel}), longer burst durations (Fig.~\ref{fig:expB-averageburstychannel}), or higher code rates. Notably, even in these more demanding scenarios the increase over $CW$ remains limited, indicating that both proposed policies retain their effectiveness across the full range of tested conditions. The gap between \emph{EED} and \emph{EEDM} is comparatively small, though consistently observable. This is attributable to \emph{EEDM}'s more aggressive eviction policy, which additionally removes MISSING and PARTIALLY\_DECODED packets that \emph{EED} conservatively retains. Nonetheless, since both policies already operate close to the theoretical minimum matrix size $CW$ (i.e., no packet is removed before reaching position $CW+1$), the additional margin available to \emph{EEDM} is inherently limited. The gap between the two widens under more challenging channel configurations, where a larger number of MISSING and PARTIALLY\_DECODED packets fall within the $[CW+1,DW]$ range of the decoding matrix, thereby giving \emph{EEDM}'s more aggressive eviction criterion a broader operating margin.

The trends observed for average FF operations per decoding attempt largely mirror those of the decoding matrix size (Fig.~\ref{subfig:expB-easychannel-ff-ops}-\ref{subfig:expB-averageburstychannel-ff-ops}). Both \emph{EED} and \emph{EEDM} outperform the \emph{EOF} algorithm, which is expected; since incoming packets are processed against a smaller number of matrix rows, the number of required FF operations decreases accordingly. The relative performance gain of \emph{EED} and \emph{EEDM} over the \emph{EOF} algorithm narrows — while remaining clearly noticeable — under more challenging channel conditions. This is not attributable to a degradation of \emph{EED}/\emph{EEDM} performance, but rather to an artifact of the \emph{EOF} algorithm's behavior; under harsher channel conditions, the \emph{EOF} algorithm's decoding matrix becomes increasingly populated by MISSING packets, whose coefficient rows are entirely zero and therefore require no operations upon interaction with incoming packets, which genuinely reduces the \emph{EOF} algorithm's operation count. This reduction, however, carries no benefit in terms of decoding outcome. Instead, the narrowing gap in raw operations therefore reflects the \emph{EOF} algorithm processing a larger share of uninformative rows.
Interestingly, the gap between \emph{EED} and \emph{EEDM} exhibits the opposite trend: \emph{EEDM}'s advantage over \emph{EED} slightly increases in more challenging scenarios. This follows directly from the decoding matrix size results discussed above — \emph{EEDM}'s more aggressive eviction of MISSING and PARTIALLY\_DECODED packets yields a comparatively smaller decoding matrix precisely in the scenarios where such packets would otherwise accumulate, translating directly into a further reduction of FF operations relative to \emph{EED}.

\emph{EED} achieves in-order delay performance statistically identical to the \emph{EOF} algorithm (Fig.~\ref{subfig:expB-easychannel-delay}-\ref{subfig:expB-averageburstychannel-delay}). This behavior is expected by design because \emph{EED} evicts only DECODED packets that lie outside the coding window and are already in-order; such packets, by definition of being in-order, have already been delivered to the upper layers prior to eviction. \emph{EED}'s eviction policy therefore has no bearing on delivery timing and serves purely to reduce decoding complexity, consistent with the matrix size and FF operation results above. \emph{EEDM}, in contrast, achieves substantial in-order delay reductions relative to both the \emph{EOF} algorithm and \emph{EED}. This improvement arises because \emph{EEDM}'s eviction of MISSING and PARTIALLY\_DECODED packets reorders the remaining DECODED packets within the window, which can change their status to in-order and thereby accelerate their delivery to the upper layers — an effect unavailable to \emph{EED}, whose eviction criterion does not act on non-decoded packet states. As with the previous metrics, the magnitude of \emph{EEDM}'s delay improvement scales with channel severity: gains are minimal under mild channel conditions and increase substantially under more challenging ones. This follows from two compounding effects under harsher conditions. First, the \emph{EOF} and \emph{EED} algorithms accumulate longer queues of out-of-order DECODED packets within the decoding window, meaning \emph{EEDM}'s reordering mechanism affects a larger number of packets. Second, the per-packet delay reduction is itself larger, since an out-of-order DECODED packet subject to \emph{EOF} or \emph{EED} policies remains queued for a proportionally longer duration before its blocking predecessors are resolved.

As a final note, across all four metrics, both $p_{\text{loss}}=10\%$ and $E[B]=8$ are shown to influence performance, with $p_{\text{loss}}=10\%$ exerting the more pronounced effect — consistent with its role as the dominant driver of channel severity in the Gilbert-Elliott model. In all tested configurations, \emph{EED} and \emph{EEDM} deliver significant performance improvements over the \emph{EOF} algorithm without any loss in delivery ratio. \emph{EED}'s benefits are confined to a marked reduction in decoding matrix size and FF operations, i.e., computational complexity, while leaving in-order delay unaffected. \emph{EEDM} extends these complexity gains further and additionally achieves substantial in-order delay reductions, at the cost of a more aggressive (though still simple to implement) eviction policy.

\subsection{The Effect of Decoding Window Management and Size}
\begin{figure*}[t]
\centering
\subfloat[\label{subfig:expA-delivery}]{
    \includegraphics[width=0.45\textwidth]{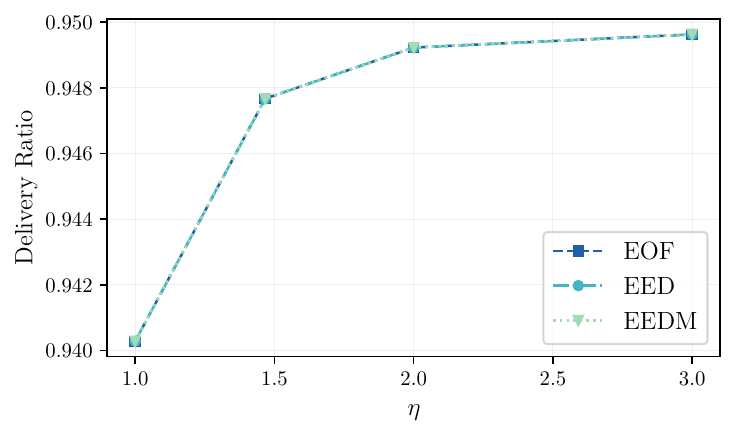}
}
\hfill
\subfloat[\label{subfig:expA-delay}]{
    \includegraphics[width=0.45\textwidth]{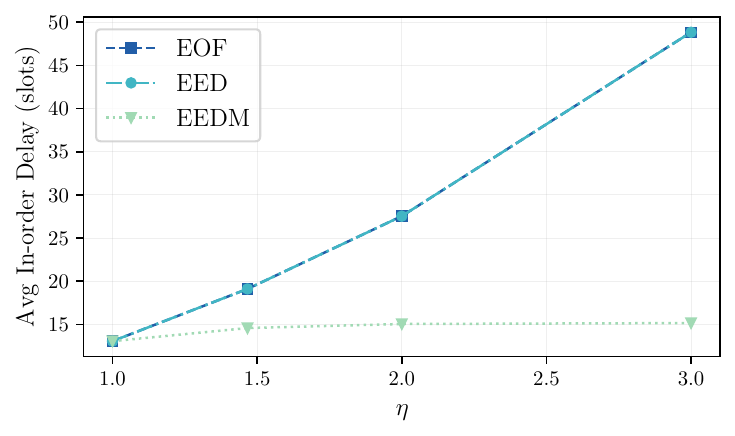}
}
\vspace{-0.8em}
\subfloat[\label{subfig:expA-dm-size}]{
    \includegraphics[width=0.45\textwidth]{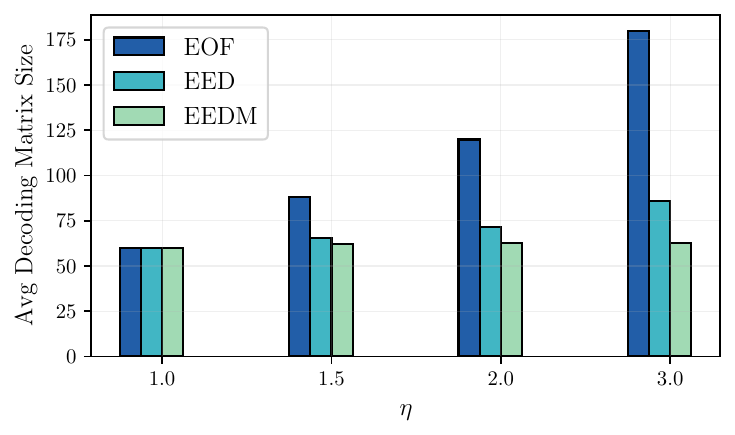}
}
\hfill
\subfloat[\label{subfig:expA-ff-ops}]{
    \includegraphics[width=0.45\textwidth]{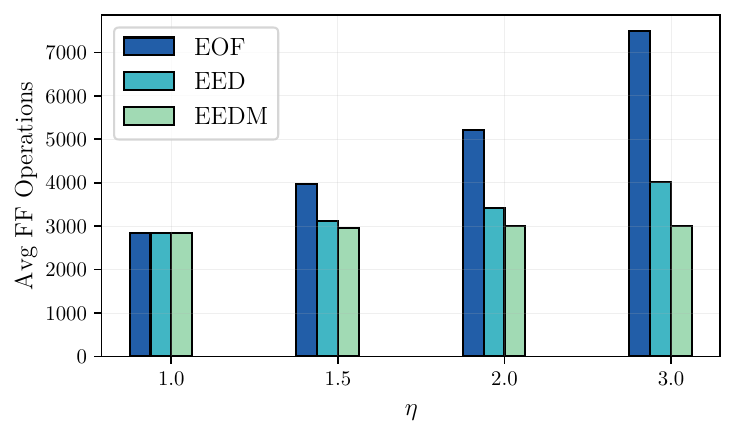}
}
\caption{Performance of \emph{EOF}, \emph{EED} and \emph{EEDM} for different $\eta=DW/CW$ values (Systematic coding, $p_{\text{loss}}=10\%$, $E[B]=8$, $R=4/5$).}
\label{fig:expA}
\end{figure*}
The second experiment investigates the sensitivity of the \emph{EOF}, \emph{EED}, and \emph{EEDM} algorithms to $DW$, parametrized as $\eta = DW/CW \in [1,3]$. This investigation is motivated by the observation that the $DW>CW$ regime affords the decoder additional flexibility to recover packets and may therefore improve decoding efficiency. Fig.~\ref{subfig:expA-delivery} reports the delivery ratio, ~\ref{subfig:expA-delay} the average in-order delay, ~\ref{subfig:expA-dm-size} the average decoding matrix size, and ~\ref{subfig:expA-ff-ops} the average FF operations as functions of $\eta$. 

As expected, all three algorithms exhibit identical performance at $\eta=1$ across all four metrics, since \emph{EED} and \emph{EEDM}'s eviction criteria only diverge from the \emph{EOF} algorithm's behavior once $DW>CW$. At $\eta=1$, the operational range $[CW+1,DW]$ in which the proposed policies act is empty by construction, and therefore all algorithms coincide.
Fig.~\ref{subfig:expA-delivery} confirms, in a second and independent setting, the theoretical guarantee that neither \emph{EED} nor \emph{EEDM} impairs decoding performance; all three algorithms achieve identical delivery ratios across the full range of $\eta$. The results also substantiate the practical value of the $DW>CW$ regime itself, as delivery ratio is shown to increase monotonically with $\eta$ for all algorithms. This confirms that enlarging $DW$ relative to $CW$ provides a genuine mechanism for improving decoding efficiency, independently of which eviction policy is employed.

Both \emph{EED} and \emph{EEDM} continue to yield substantial reductions in decoding matrix size and FF operations relative to the \emph{EOF} algorithm, and these improvements grow with $\eta$ This trend is a direct consequence of the eviction mechanism underlying both policies; as $\eta$ increases, the operational range $[CW+1,DW]$ over which \emph{EED} and \emph{EEDM} identify and remove non-essential packets widens correspondingly, creating more opportunities for eviction and consequently amplifying the relative gains over the \emph{EOF} algorithm, whose matrix size scales directly with $DW$ regardless of $\eta$.
The performance gap between \emph{EED} and \emph{EEDM} likewise widens with increasing $\eta$. This follows naturally from \emph{EEDM}'s more aggressive eviction criterion, which, as the operational range over which it is applied grows, accumulates a proportionally larger advantage over \emph{EED}'s more conservative policy. A particularly notable finding is the stability of \emph{EEDM}'s performance across the tested range of $\eta$. Even at $DW=3$, \emph{EEDM} exhibits only marginal degradation in both decoding matrix size and FF operations relative to its performance at lower $\eta$ values. This near-flat scaling behavior substantially improves the achievable trade-off between decoding efficiency and computational complexity, and in turn increases the practical attractiveness of adopting larger $DW$ values — a regime that would otherwise impose a computational and memory burden that scales directly with window size under the \emph{EOF} algorithm.

Consistent with the results of the first experiment, \emph{EED} and the \emph{EOF} algorithm exhibit statistically indistinguishable in-order delay across all tested values of $\eta$, reaffirming that \emph{EED}'s eviction policy — restricted to already-delivered, in-order DECODED packets — has no bearing on delivery timing regardless of the decoding window size. \emph{EEDM}, by contrast, achieves substantial and increasing in-order delay reductions as $\eta$ grows, with the improvement reaching approximately $60\%$ relative to \emph{EOF} and \emph{EED} at $\eta=3$. This is consistent with the mechanism identified in the first experiment; as $\eta$ increases, a larger number of MISSING and PARTIALLY\_DECODED packets populate the $[CW+1,DW]$ range under \emph{EOF} and \emph{EED}, delaying the in-order status, and thus, the delivery of subsequent DECODED packets; \emph{EEDM}'s eviction of these blocking packets directly alleviates this effect, with the benefit compounding as the window grows. Notably, \emph{EEDM}'s in-order delay exhibits remarkable stability across $\eta$, with only a minimal increase observed even at $DW=3CW$. This indicates that \emph{EEDM} effectively decouples in-order delay from $DW$, in stark contrast to the \emph{EOF} algorithm and \emph{EED}, whose delay grows with $\eta$.

Taken together, these results demonstrate that \emph{EEDM} substantially improves the trade-off between decoding efficiency, in-order delay, and computational complexity as $DW$ is enlarged relative to $CW$. While increasing $DW$ under the \emph{EOF} algorithm yields improved decoding efficiency only at the cost of increasing delay and computational burden, \emph{EEDM} largely decouples these costs from $DW$, keeping both delay and computational overhead close to their $\eta=1$ levels. Equivalently, given a fixed computational or latency budget, \emph{EEDM} makes it feasible to employ a substantially larger $DW$, resulting in higher decoding efficiency than would otherwise be attainable.

\subsection{Transmission Mode and Decoding Management}
\begin{figure*}[t]
\centering
\subfloat[\label{subfig:expC-delivery}]{
    \includegraphics[width=0.48\textwidth]{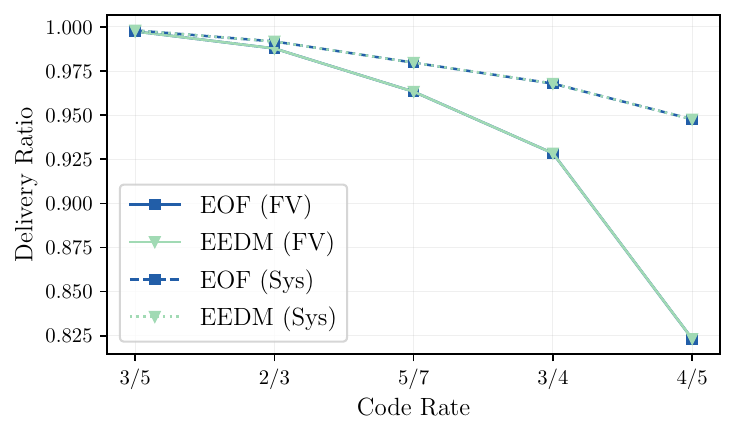}
}
\hfill
\subfloat[\label{subfig:expC-delay}]{
    \includegraphics[width=0.48\textwidth]{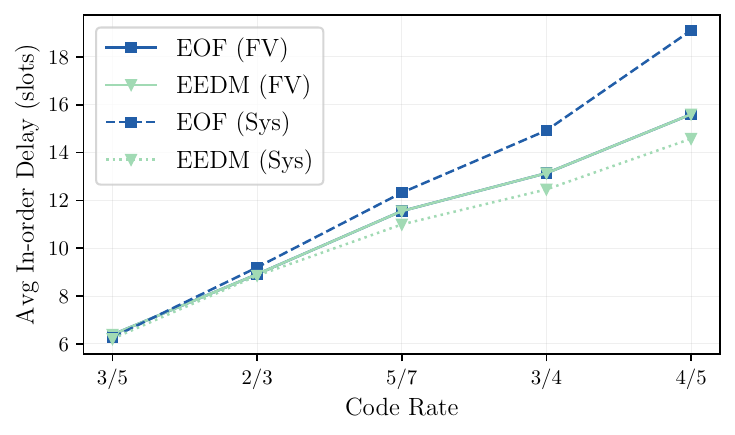}
}
\vspace{-0.8em}
\subfloat[\label{subfig:expC-dm-size}]{
    \includegraphics[width=0.48\textwidth]{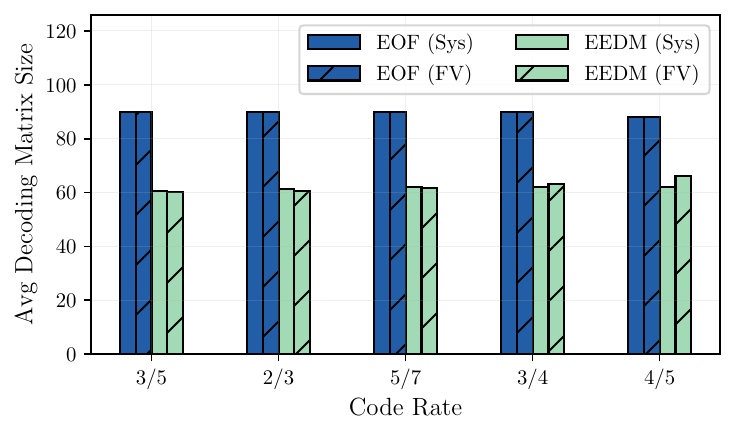}
}
\hfill
\subfloat[\label{subfig:expC-ff-ops}]{
    \includegraphics[width=0.48\textwidth]{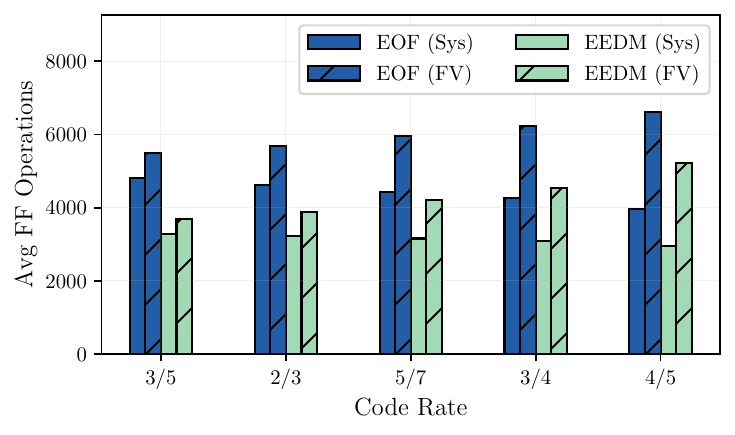}
}
\caption{Performance of \emph{EOF} and \emph{EEDM} with Full Vector (FV) and Systematic (Sys) coding for different code rates ($p_{\text{loss}}\!=\!10\%$, $E[B]\!=\!8$, $DW\!=\!1.5CW$).}
\label{fig:expC}
\end{figure*}
The preceding experiments consider systematic coding, which is the natural default for delay- and complexity-sensitive applications due to the immediate usability of native packets upon arrival. Nevertheless, there are practical scenarios in which the use of full-vector (FV) coding may be preferable, or even imperative, despite its higher baseline complexity. Examples include scenarios in which design considerations requires packet interchangeability, e.g., multi-hop, multi-path propagation, or payload confidentiality and obfuscation. Given these practical cases, it is important to establish whether the proposed window management strategies retain their benefits under FV coding, and, if so, to characterize how those benefits manifest differently relative to the systematic case. Fig.~\ref{fig:expC} addresses this question, comparing the \emph{EOF} and \emph{EEDM} algorithms under both coding modes across a range of code rates.

Fig.~\ref{subfig:expC-delay} shows that, while \emph{EEDM}'s in-order delay advantage over the \emph{EOF} algorithm is well preserved under systematic coding — consistent with the results of the previous experiments — this advantage nearly vanishes under FV coding, where \emph{EEDM} and \emph{EOF} exhibit almost identical in-order delay. 
This behavior is a direct consequence of the in-order-friendly property of FV coding. Because every transmitted packet is coded, there is no risk of receiving a native, newer packet out of turn. Instead, the decoder processes the matrix into reduced row echelon form, naturally resolving and releasing packets in chronological order from oldest to newest.
In other words, there are no native packets, that could arrive and be delivered out of order, which is precisely the mechanism through which \emph{EEDM} accelerates delivery under systematic coding (Section~\ref{EEDM}). With no out-of-order DECODED packets for \emph{EEDM}'s eviction mechanism to act upon, its delay advantage over \emph{EOF} is lost in FV mode.
A second, related observation concerns the low absolute in-order delay values obtained under FV coding, visible in Fig.~\ref{subfig:expC-delay} at high code rates. This apparent advantage is a statistical artifact rather than a genuine performance benefit, and must be interpreted jointly with Fig.~\ref{subfig:expC-delivery}. As code rate increases (i.e., redundancy decreases), FV coding exhibits a comparatively steeper decline in delivery ratio. Since undelivered (non-decoded) packets do not contribute to the in-order delay statistic, a lower delivery ratio mechanically biases the reported average delay downward — packets that would have contributed the largest delays are precisely those that fail to be decoded at all, and are excluded from the average by construction. The reduced in-order delay of FV coding at high code rates therefore reflects reduced decoding efficiency rather than improved delivery timeliness, and should not be read as an advantage of FV coding in this regime.

The in-order decoding property of FV coding also explains the modest additional advantage observed for \emph{EEDM} under FV relative to Sys mode in Fig.~\ref{subfig:expC-dm-size}, where both substantially outperform the \emph{EOF} algorithm. Because FV coding decodes packets strictly in order, a larger share of DECODED packets within the window are also in-order at any given time, compared to systematic coding. Recall that in-order DECODED packets are eligible for early eviction under \emph{EEDM}; the higher incidence of such packets under FV coding therefore affords \emph{EEDM} more opportunities for eviction, yielding a marginally smaller average decoding matrix relative to Sys mode. This trend reverses only at the highest tested code rate ($R=4/5$), where the reduced redundancy causes FV decoding efficiency to collapse (Fig.~\ref{subfig:expC-delivery}). With substantially fewer packets reaching DECODED status under FV in this regime, the pool of in-order-eligible packets available for \emph{EEDM}'s eviction shrinks correspondingly, and the matrix-size advantage of FV over Sys disappears.

As expected, FV coding incurs a higher computational cost than systematic coding under both algorithms (Fig.~\ref{subfig:expC-ff-ops}), since FV packets lack the single-nonzero-coefficient structure of native packets and therefore require full elimination against every row of the active decoding matrix. Nevertheless, \emph{EEDM} reduces FF operations relative to \emph{EOF} under both coding modes. This reduction stems from the same underlying mechanism identified in the earlier experiments. By maintaining a smaller active decoding matrix, \emph{EEDM} ensures that each incoming packet interacts with fewer rows, reducing the number of operations performed irrespective of whether those rows originate from native or coded packets.

These results indicate that, in scenarios where the use of FV coding is imperative, \emph{EEDM} continues to provide substantial computational benefits despite the loss of its delay advantage in this mode. This is particularly significant because FV coding typically requires higher redundancy than systematic coding to achieve comparable decoding efficiency (Fig.~\ref{subfig:expC-delivery}), which in turn compounds its inherent computational disadvantage. The magnitude of \emph{EEDM}'s benefit is sufficient to offset this compounded cost. As shown in Fig.~\ref{subfig:expC-ff-ops}, the FF operation count of \emph{EEDM} under FV coding at $R=2/3$ is comparable to that of the \emph{EOF} algorithm under systematic coding at $R=4/5$, while the former achieves a markedly higher delivery ratio (Fig.~\ref{subfig:expC-delivery}). In other words, \emph{EEDM} allows FV coding to be operated at a lower code rate — and hence with meaningfully higher decoding efficiency — at a computational cost no greater than an unoptimized systematic scheme operating at a higher, less reliable code rate. This result substantially broadens the practical applicability of FV coding in complexity-constrained deployments where it would otherwise be considered prohibitive.

\section{CONCLUSION}\label{conclusion}
This paper investigated decoding window management as a design dimension for SW-RLNC, complementing the encoder-side tuning that has dominated prior work. We derived conditions under which packets can be safely removed from the decoding window without compromising decodability or in-order delivery, and used them to design two lightweight management schemes, namely \emph{EED} and \emph{EEDM}, proving that both preserve decodability by construction. Simulation results across diverse channel conditions and code rates confirm this guarantee empirically, while showing that both schemes significantly reduce computational complexity, while \emph{EEDM} additionally achieves substantial reductions in in-order delivery delay, with benefits that grow in more challenging channel conditions and larger decoding windows.

These performance gains can be leveraged via two complementary approaches. First, for a fixed decoding window size, the proposed schemes reduce the computational complexity and latency required to achieve a target decoding efficiency. Second, given a constrained complexity or delay budget, they permit the utilization of larger decoding windows, thereby yielding higher decoding efficiency that would otherwise be unattainable. This flexibility renders the proposed framework highly attractive for latency- and resource-constrained deployments, such as URLLC and satellite networks. However, fully maximizing these practical gains requires a joint optimization framework. Specifically, the decoding window management policy, the coding window size, and the available receiver resources must be jointly optimized and dynamically adapted in real time to time-varying channel conditions. Orchestrating this joint optimization within a practical framework remains a subject of future work. Additionally, future investigations will examine the design and impact of more elaborate lossless or relaxed packet eviction rules to enhance the feasibility of SW-RLNC in practical deployments featuring resource-constrained hardware.

\section{Acknowledgments}
The research work was supported by the Hellenic Foundation for Research and Innovation (HFRI) under the 5th Call for HFRI PhD Fellowships (Fellowship Number: 19139).

\bibliographystyle{IEEEtran}
\bibliography{IEEEabrv, references}

\vfill

\end{document}